\definecolor{Mahogany}{rgb}{0.75, 0.25, 0.0}
\newtheorem*{theorem*}{Theorem}
\newtheorem{lemma}{Lemma}
\newtheorem{corollary}{Corollary}
\newtheorem{definition}{Definition}
\newcommand{\Cxc}{{\Cx^c}}
\newcommand{\Coc}{{\Co^c}}
\newcommand{\exgrata}{\textit{e.g.}}
\newcommand{\idest}{\textit{i.e.}}
\newcommand{\Hmat}{\mathbf{H}}
\newcommand{\evec}{\mathbf{e}}
\newcommand{\synd}{\mathbf{s}}
\newcommand{\edgeq}{e_q}
\renewcommand{\sc}{\synd^c}
\newcommand{\aposteriori}{\textit{a posteriori}\xspace}
\newcommand{\apriori}{\textit{a priori}\xspace}
\newcommand{\Cx}{C_\bullet}
\newcommand{\Co}{C_\circ}
\newcommand{\Tiq}{\mathcal{T}_{i,q}}
\newcommand{\hTiq}{\hat{\mathcal{T}}_{i,q}}
\newcommand{\bottom}{\_}
\newcommand{\link}{{\;\setlength{\fboxsep}{0pt}\protect\framebox[1.75eX]{\rule{0ex}{1ex}}\;}}
\newcommand{\xlink}{{\;\setlength{\fboxsep}{0pt}\protect\framebox[1.75eX]{\rule{1.75ex}{1ex}}\;}}
\newcommand{\app}{\textsc{app}}
\DeclareMathOperator{\supp}{supp}
\begin{document}
\title{\vspace*{-15mm}A Blindness Property of the Min-Sum Decoding for the Toric Code}
\author{{\large Julien du Crest$^1$, \qquad Mehdi Mhalla$^2$, \qquad Valentin Savin$^3$}\\
{\small  $^1$Université Grenoble Alpes, Grenoble INP, LIG, F-38000 Grenoble, France}\\[-2mm]
{\small  $^2$Université Grenoble Alpes, CNRS, Grenoble INP, LIG, F-38000 Grenoble, France}\\[-2mm]
{\small  $^3$Université Grenoble Alpes, CEA-Léti, F-38054 Grenoble, France}}
\date{\vspace*{-2em}}

\maketitle

\begin{abstract}
\small\noindent Kitaev’s toric code is one of the most prominent models for fault-tolerant quantum computation, currently regarded as the leading solution for connectivity constrained quantum technologies. Significant effort has been recently devoted to improving the error correction performance of the toric code under message-passing decoding, a class of low-complexity, iterative decoding algorithms that play a central role in both theory and practice of classical low-density parity-check codes. Here, we provide a theoretical analysis of the toric code under min-sum (MS) decoding, a message-passing decoding algorithm known to solve the maximum-likelihood decoding problem in a localized manner, for codes defined by acyclic graphs. Our analysis reveals an intrinsic limitation of the toric code, which confines the propagation of local information during the message-passing process. We show that if the unsatisfied checks of an error syndrome are at distance $\ge 5$ from each other, then the MS decoding is locally blind: the qubits in the direct neighborhood of an unsatisfied check are never aware of any other unsatisfied checks, except their direct neighbor.  Moreover, we show that degeneracy is not the only cause of decoding failures for errors of weight at least 4, that is, the MS non-degenerate decoding radius is equal to $3$, for any toric code of  distance $\ge 9$. Finally, complementing our theoretical analysis, we present a pre-processing method of practical relevance. The proposed method, referred to as stabiliser-blowup, has linear complexity and allows correcting all (degenerate) errors of weight up to 3, providing quadratic improvement in the logical error rate performance, as compared to MS only.  
\end{abstract}

\section{Introduction}

 Message passing (MP) decoding stands as both the fundamental theoretical reason behind the invention of classical LDPC codes (departing from previous minimum distance centric constructions), and the key to their success and widespread application in real-world systems. However, quantum LDPC codes are known to be classically degenerate~\cite{poulin2008iterative},  thus the sparsity of their Tanner graph does no longer fulfill its role of enabling efficient MP decoding, but merely acts as an enabler for fault-tolerance (\emph{\exgrata}, fault-tolerant syndrome extraction and fault-tolerant operations on logical qubits). Consequently, a significant effort has been devoted over the last few years to efficient decoding of quantum LDPC codes, by either combining the MP decoding with a post-processing step~\cite{panteleev2021degenerate, roffe2020decoding, ducrest2022stabilizer, ducrest2024check}, and/or improving the MP decoding performance itself~\cite{wang2012enhanced, rigby2019modified, kuo2022exploiting, yao2023belief, old2023generalized}. 

Here, we consider the MP decoding of the Kitaev toric code~\cite{kitaev2003fault}, the planar version of which is currently  the dominant error-correction solution to achieve fault-tolerance in large-scale quantum computers, especially for connectivity constrained  technologies~\cite{google2023suppressing}. Among the large class of quantum LDPC codes, the toric code is also known to be the less responsive to  MP, due the the presence of weight-$2$ degenerate errors undecodable by vanilla MP decoders, such as belief-propagation (BP)\footnote{We use BP for the MP decoding  using Bayesian  rules to update the exchanged messages (also known as sum-product). It performs the  exact Bayesian inference on trees, but may be  used on more general graphical models. Note that BP is sometimes used (included in some of the papers cited here) with the generic meaning of MP.} or min-sum (MS).  Consequently, the logical error rate of these vanilla decoders scales as $p^2$, where $p$ is the physical error rate, well behind the $p^{\left\lceil \frac{d}{2} \right\rceil}$ scaling of a minimum distance decoding (\emph{\idest}, correcting any error of weight  $\leq \left\lfloor \frac{d-1}{2} \right\rfloor$), where $d$ is the minimum distance of the code. A few approaches proposed in the literature, such as   BP with memory~\cite{kuo2022exploiting}, generalized BP~\cite{old2023generalized}, or neural-BP~\cite{NNBP_QEC, varsamopoulos2019comparing}, succeeded to improve the logical error rate performance for small distance toric codes ($d\leq 9$), but they all failed to scale to larger distance codes. Precisely, for $d\leq 9$, the above approaches may yield a logical error rate that scales as $p^{\left\lceil \frac{d}{2} \right\rceil}$, but this scaling does not longer improve with increasing minimum distance\footnote{The Adaptive BP with Memory (AMBP) in~\cite{kuo2022exploiting} may improve the slope of the logical error rate curve beyond $d=9$, which comes from the use of many MBP decoders (with different meta-parameter values). Put differently, this improvement is attributable to a form of decoding diversity, rather than the intrinsic error correction capacity of the MBP decoder, which saturates at $d=9$ \cite[Section~4.4]{kuo2022exploiting}.}, indicating the presence of undecodable errors of weight $5$. Here, we try to understand if such difficulties  come from an intrinsic limitation of the toric code itself, hindering further improvement of MP-based decoding for $d>9$.   

We consider binary MP decoding, corresponding to separate decoding of $X$ and $Z$-type errors. The input of the decoder consists of the error syndrome and  the \apriori \emph{soft value} of each qubit (\emph{\exgrata}, error probabilities or log-likelihood ratio values). Decoding is carried out through an iterative exchange of extrinsic messages between qubits and checks they participate in, which is conveniently described via the  Tanner graph representation of the code. In this representation, messages are exchanged between neighboring qubit and checks of the Tanner graph, and the iterative nature of the decoding process is aimed at spreading the information globally.
At each decoding iteration, an \aposteriori \emph{soft value}is computed for each qubit, based on the collected messages, gradually improving the estimation of the  error. In the sequel, we shall simply use the terminology of  \apriori and \aposteriori value (omitting soft). For  details on MP decoders we refer to~\cite{Wiberg96}.

Our goal is to understand how the information is spreading on the toric code, and what is the maximum length it can travel. To provide a formal statement, we first introduce the notion of local blindness of  an MP decoder. Informally, given an error syndrome $\synd$ and an unsatisfied check $c$, \emph{\idest}, such that $\synd(c)=1$, we consider a fake syndrome $\sc$, having $c$ as the only unsatisfied check (note that $\sc$ is not  a valid error syndrome, since no error can generate it). We say that the MP decoding of $\synd$ is \emph{locally blind} in the neighborhood of $c$, if running the MP decoding on the error syndrome $\synd$, or on the fake syndrome $\sc$, yields the same \aposteriori 
 value for any qubit $q$ neighboring  $c$, and for any number of decoding iterations. Put differently,  at no iteration are the qubits  neighboring $c$  aware that it is not the only unsatisfied check of the syndrome: the MP decoder fails to convey the information from the other unsatisfied checks. Intuitively, this may happen when $c$ is too far away from the other unsatisfied checks of the syndrome. 
 
 \subsection*{Main Results}

In this work, we focus on the MS decoding, with the conventional flooded scheduling. The reason is twofold. First, MS is an MP decoding algorithm that is aimed at solving the maximum likelihood (ML) decoding problem in a \emph{localized} manner (see Section~\ref{sec:preliminaries} for details). It actually succeeds in doing so for  codes defined by acyclic graphs.  However, quantum LDPC codes (as well as good classical LDPC codes) are defined by graphs with cycles,  preventing local information from  being spread effectively, and causing a degradation of the error correction performance with respect to ML decoding. Second,  MS  presents a number of practical advantages, \emph{\exgrata}, low computational complexity (only requires additions and comparisons) and robustness to low precision arithmetic (\emph{\exgrata}, below $6$-bit messages), being the \emph{de facto} solution used in practical applications and hardware implementations~\cite{boutillon2014hardware}. Moreover, assuming a Pauli channel model for physical qubit errors, with $X$ and $Z$ errors  decoded independently, MS does not need an \apriori knowledge of the channel $X, Y, Z$ error probabilities~\cite{ducrest2022stabilizer}. 
 
 Our first result states that the \textbf{local} information exchange of the MS is not spreading \textbf{globally}:

\begin{restatable}[MS Local Blindness]{thm}{limitedscope}\label{thm:limitedscope}
 Consider an error syndrome $\synd$ on a toric code, such that all the unsatisfied checks are at distance at least 5 from each other. 
 Then, under the i.i.d. noise assumption,  the MS decoding of $\synd$ is locally blind in the neighborhood of any unsatisfied check. 
\end{restatable}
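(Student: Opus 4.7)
The plan is to prove the theorem by induction on the decoding iteration $t$, comparing the MS messages around each unsatisfied check $c$ under the true syndrome $\synd$ with those obtained when running the MS on the fake single-check syndrome $\sc$. Under the i.i.d.\ noise assumption, every qubit carries the same a priori LLR, which I denote by $\lambda > 0$, so the initial qubit-to-check messages are uniformly equal to $\lambda$ and independent of the syndrome, handling the base case.

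The first substantive step is to characterise the MS dynamics around $c$ under $\sc$. A direct induction on $t$, using the degree-$(2,4)$ regularity of the toric Tanner graph and the uniform initialisation, should show that at iteration $t$ the two check-to-qubit messages incident to each of the four neighbours $q_c$ of $c$ are $m^{(\sc)}_{c \to q_c}(t) = -t\lambda$ and $m^{(\sc)}_{c'_{q_c} \to q_c}(t) = +t\lambda$, where $c'_{q_c}$ denotes the second check containing $q_c$. These cancel in the \aposteriori sum, giving $v^{(\sc)}_{q_c}(t) = \lambda$ at every iteration. The same induction should track the messages one Tanner-layer further out and identify, in particular, the $0$-magnitude messages emitted by each $c'_{q_c}$ to its non-$q_c$ neighbours starting from iteration $2$; these $0$-messages arise because $q_c$ is fed into $c'_{q_c}$ with magnitude strictly smaller than the other three inputs, forcing the min to vanish, and they will act as a \emph{barrier wave} propagating outward from $c$ at the rate of two Tanner edges per iteration.

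Under the true syndrome $\synd$, I would then set up a parallel induction: that inside a local bubble $B(c)$ around $c$ --- consisting of $c$, its four qubit neighbours $q_c$, their respective second checks $c'_{q_c}$, and the additional qubits neighbouring those checks --- the MS messages coincide with those computed under $\sc$, for every $t$. The inductive step reduces to verifying that the qubit-to-check messages crossing $\partial B(c)$ inward at iteration $t$ are syndrome-invariant. This is where the distance-$\ge 5$ hypothesis enters: it ensures that the barrier waves emanating from any two distinct unsatisfied checks meet outside the respective bubbles, so that the only signals reaching $\partial B(c)$ from a distant unsatisfied check $c_{\mathrm{other}}$ do so through messages of magnitude $0$, which contribute nothing to the MS updates inside $B(c)$ and hence leave the cancellation at each $q_c$ untouched.

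The main obstacle is precisely this last claim --- that the messages crossing into $B(c)$ from the outside remain syndrome-invariant for \emph{every} iteration $t$, not only finitely many. A naive dependency argument expanding the bubble of agreement by two Tanner edges per iteration would already fail at $t = 3$, since unsatisfied checks at distance $5$ would then enter the dependency cone. The proof must therefore rest on a persistence/saturation property specific to min-sum on the toric code: once the messages crossing $\partial B(c)$ have magnitude $0$ at some iteration because of a barrier wave, the min operation forces them to keep magnitude $0$ at every subsequent iteration, because the bottleneck created at the earliest iteration is perpetuated through the subsequent check updates. Establishing this persistence rigorously, through a careful case analysis of message magnitudes along the length-$5$ paths joining distinct unsatisfied checks in the toric lattice, is where I expect the bulk of the technical effort to lie.
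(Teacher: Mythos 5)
Your proposal takes a genuinely different route from the paper --- a direct induction on the MS message dynamics rather than the paper's analysis of minimal configurations on Wiberg computation trees --- but it contains both a concrete factual error and an acknowledged, unfilled gap at its crux. The factual error: under the fake syndrome $\sc$, the \aposteriori value of a qubit $q_c$ neighboring $c$ is \emph{not} equal to $\lambda$ at every iteration, and the message $m_{c'_{q_c}\to q_c}(t)$ is \emph{not} $+t\lambda$ for all $t$. The zero-magnitude messages that $c'_{q_c}$ emits toward its non-$q_c$ neighbours at iteration $2$ travel around the two length-$4$ plaquettes containing the edge $q_c$ and re-enter $c'_{q_c}$ two iterations later: a direct computation gives $m_{c'_{q_c}\to q_c}(4)=\min(2\lambda,2\lambda,4\lambda)=2\lambda$ while $m_{c\to q_c}(4)=-4\lambda$, so the \aposteriori value at iteration $4$ is $\lambda-4\lambda+2\lambda=-\lambda$. (This is consistent with the paper: its Corollary only uses a symmetry argument to conclude undecodability, not constancy of the \aposteriori values.) Your ``barrier wave'' therefore does not stay outside the bubble $B(c)$; it echoes back into it through the girth-$4$ cycles, which is precisely the graph feature that makes the toric code hard.

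The second, more structural problem is the one you name yourself: the claim that the messages crossing $\partial B(c)$ inward are syndrome-invariant for \emph{every} iteration is the entire content of the theorem, and your proposed mechanism for it --- that these inward messages have magnitude $0$ --- is not what happens. The zero messages propagate \emph{outward}; what returns inward is nonzero (e.g.\ the $2\lambda$ above), and what must actually be proved is that these nonzero inward messages are \emph{equal} under $\synd$ and $\sc$, even though their depth-$t$ dependency cones reach the other unsatisfied checks from $t=3$ onward. No persistence-of-zero argument can substitute for this. The paper resolves exactly this difficulty by abandoning message tracking: it expresses $\app(q,i)$ as a difference of weights of minimal configurations on the decoding tree (Theorem~\ref{theo:Wiberg96}), shows via Lemmas~\ref{lem:unlabeled-ge4}--\ref{lem:c2c} that minimal configurations can be chosen with all labeled links of length $\le 4$ --- so that under the distance-$\ge 5$ hypothesis every proper labeled link joins two copies of the \emph{same} check --- and then splits each configuration into a part attributable to $c$ and a part attributable to the remaining unsatisfied checks; the latter has identical weight in the root-labeled and root-unlabeled cases and cancels in the difference. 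If you want to pursue your message-level approach, you would in effect have to reprove this configuration-splitting statement in the language of message magnitudes, which is considerably harder than the case analysis you sketch.
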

In particular, it follows that such a syndrome is undecodable by  MS. Further, we are interested in determining the smallest weight of an undecodable non-degenerate error. We say that an error $\evec$ is \textbf{non-degenerate} if it generates a syndrome that cannot be generated by any other error $\evec'\neq \evec$, such that $|\evec'|\leq |\evec|$. For instance, 1-dimensional errors $\evec$ of weight $|\evec| \leq \left\lfloor \frac{d-1}{2}\right\rfloor$ and generating a syndrome of weight $2$ are non-degenerate. But some 2-dimensional errors may also be non-degenerate.

Before introducing the next Theorem, we refine the classical notion of decoding radius (\idest, the largest integer $\omega$, such that any   error of weight $\leq \omega$ is decoded correctly), by introducing the \textbf{non-degenerate decoding radius}, which only takes into account non-degenerate errors.

\begin{restatable}[MS Non-Degenerate Decoding Radius] {thm}{weightfourbis}\label{thm:4inarow}
    For any toric code of distance $\ge 9$, the non-degenerate decoding radius of the MS is 3. 
\end{restatable}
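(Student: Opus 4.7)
The theorem requires showing two things: first, that every non-degenerate error of weight at most 3 is decoded correctly by MS, and second, that some non-degenerate error of weight 4 is not. I treat the two directions separately.

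For the existence direction, the plan is to exhibit an explicit weight-4 non-degenerate error and demonstrate MS failure. A natural candidate is the straight 4-edge dual path: an $X$-error supported on four consecutive edges forming a straight line in the dual lattice. Its syndrome consists of exactly two unsatisfied plaquettes at the endpoints of the path, at dual Manhattan distance 4. For a toric code of distance $d \ge 9$, the alternative path around the torus has length at least $d - 4 \ge 5$, so this weight-4 error is the unique minimum-weight realization of its syndrome and is therefore non-degenerate. Failure of MS is then established by a direct analysis of the local min-sum dynamics, in the same spirit as \cref{thm:limitedscope}, showing that the four-fold rotational symmetry of the configuration around each endpoint plaquette is preserved throughout the iterations, so that the \aposteriori values of the four qubits adjacent to each endpoint coincide, preventing MS from selecting a correct qubit to flip.

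For the universality direction, the plan is a case analysis on non-degenerate errors $\evec$ with $|\evec| \le 3$, classified up to translation and rotation symmetries of the toric code. Degenerate configurations (such as the L-shape weight-2 error, which has two minimum-weight realizations) are discarded, leaving essentially three connected configurations — the single qubit, the straight 2-edge dual path, and the straight 3-edge dual path — together with disjoint unions of such pieces placed far enough apart to be decoded independently. For well-separated unions, locality of MS implies that each component is decoded as if in isolation, reducing the analysis to the three connected base cases. Each base case is handled by an explicit analysis of the MS dynamics on the local Tanner sub-graph, verifying that after enough iterations every erroneous qubit receives a strictly smaller \aposteriori value than every non-erroneous qubit, so that the final hard decision recovers $\evec$.

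The main obstacle is handling the weight-3 straight path. The two unsatisfied plaquettes are close to each other (at dual Manhattan distance 3), and their local Tanner neighborhoods interact through the many short cycles of the toric code graph. The analysis must show that the asymmetry between the true weight-3 error and its stabilizer-equivalent alternatives (of weight at least $d - 3 \ge 6$) is preserved across iterations, despite the cycle-induced message correlations. This requires careful inductive bookkeeping of the min-sum contributions from competing paths, and is the analytical counterpart of the failure mode exposed in the existence direction: the sharp transition between weight 3 (decodable) and weight 4 (not) reflects exactly the point at which this asymmetry is lost and local blindness takes over.
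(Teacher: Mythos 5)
Your overall structure (a weight-4 witness plus a case analysis for weight $\le 3$) matches the paper's, and you pick the same witness: the straight 4-qubit error $\epsilon^4$. But both halves of your argument have genuine gaps.

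For the existence direction, the claimed ``four-fold rotational symmetry of the configuration around each endpoint plaquette'' does not exist: the syndrome $\synd^4$ has a second unsatisfied check sitting at distance exactly $4$ in one specific direction from each endpoint, so the neighborhood of an endpoint check is only reflection-symmetric about the line of the error. A symmetry argument therefore gives at most $\app(q_{\mathrm{up}})=\app(q_{\mathrm{down}})$ and says nothing about the left/right qubits --- and distinguishing left from right is precisely what a correct decoder would have to do here. What is actually needed is the \emph{blindness} property at distance $4$, and this is where the real difficulty lies: Theorem~\ref{thm:limitedscope} and Lemma~\ref{lem:c2c} require the unsatisfied checks to be at distance $\ge 5$, and at distance $4$ there can be labeled proper links joining copies of \emph{different} checks (necessarily $\Lambda$-links). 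The paper handles this with a weaker analog of Lemma~\ref{lem:c2c} (Lemma~\ref{lem:c2c-e4}, relying on Lemma~\ref{lem:lambda}) and a pruning of the decoding tree showing that all severed edges are unlabeled in the minimal configurations. Saying the analysis is ``in the same spirit as'' Theorem~\ref{thm:limitedscope} skips exactly the step that makes the weight-4 case nontrivial.

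For the universality direction, your classification into three connected base cases (single qubit, straight 2-path, straight 3-path) plus well-separated unions undercounts the non-degenerate errors. Discarding the L-shape does not leave only straight paths: there are four types of weight-2 non-degenerate errors of diameter $3$ up to symmetry (e.g.\ two parallel edges one step apart, two non-adjacent perpendicular edges), and a larger finite family of weight-3 errors whose components are at distance between $1$ and $3$ without being collinear; all of these must be checked. Moreover, two quantitative ingredients are missing: the decomposition of well-separated unions needs an explicit separation threshold tied to the number of iterations each piece takes to converge (the paper's Lemma~\ref{lem:sigma1-sigma2}, which also needs the technical hypothesis that the faster-converging piece keeps outputting a valid correction at later iterations), and the verification of each base case on one code must be transferred to all distances $\ge 9$, which requires an embedding argument with a condition of the form $d \ge 2k+\delta$ to exclude wrap-around contributions in the decoding tree (the paper's Lemma~\ref{lem:sigma}, combined with exhaustive numerical checks for $9 \le d < 18$). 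Without a bound on the number of iterations to convergence for each base case, ``after enough iterations'' is not a proof.
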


We give formal proofs of Theorems~\ref{thm:limitedscope} and~\ref{thm:4inarow}, based on the formalism of decoding trees. 
We also conjecture that Theorem \ref{thm:limitedscope}  holds for the normalized MS decoder, irrespective of the normalization factor (note also that for an appropriate choice of the normalization factor, the normalized MS provides a good approximation of BP). 
Although the proof techniques we use do not translate directly to normalized MS, we provide numerical evidence to corroborate our conjecture.

Finally, we propose a method to remove low-weight degeneracy on the toric code, which naturally complements the above analysis. 
Although the toric code is know to be the hypergraph product code of two classical repetition codes~\cite{tillich2013quantum}, on which the MS decoding radius is equal to $\left\lfloor \frac{d-1}{2}\right\rfloor$, the decoding radius of the MS on the toric code is only equal to~1, and the essence of Theorem~\ref{thm:4inarow} is that even the \textbf{non-degenerate} decoding radius is no greater than~$3$.
We propose a new pre-processing that runs in linear time and allows  correcting all (degenerate) errors of weight up to~$3$ thus improving the decoding radius of the MS to 3. We refer to this pre-processing step as stabiliser-blowup (SB), and it amounts to applying a change of variable that locally removes the degeneracy to allow the decoder to converge. We formally prove the following Theorem.

\begin{restatable}[Stabilizer Blowup Pre-Processing]{thm}{SBMS}\label{thm:SBMS}
    SB+MS is able to correct all errors of weight up to 3 on a toric code of distance $\ge 7$. 
\end{restatable}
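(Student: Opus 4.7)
The approach is a finite case analysis on error patterns of weight at most $3$, modulo the automorphism group of the toric lattice and the $X$/$Z$ duality (so we may restrict to $Z$-type errors). The distance hypothesis $d\ge 7$ guarantees that the support of any weight-$\le 3$ error fits in a topologically trivial region of the torus, so only the local combinatorics of the square lattice matter and no wraparound interferes.

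The first step is to enumerate the inequivalent error configurations. Weight $1$ is immediate. Weight $2$ splits into \emph{degenerate} cases (two edges of a common face, stabilizer-equivalent to the complementary pair of edges of that face) and non-degenerate pairs. Weight $3$ configurations are grouped by the pairwise positions of the three edges: three well-separated edges, two-plus-one with the pair on a common face or not, three edges sharing a vertex, three edges within a pair of adjacent faces, etc. For each class one writes down the syndrome and determines whether it admits another weight-$\le 3$ preimage.

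The crux is to show that in each class either MS already decodes, or SB performs a local change of variable that yields an instance MS can handle. Concretely, SB inspects the syndrome, detects the half-face signature (two unsatisfied checks at distance~$2$ sharing a face), and flips the two candidate qubits along the corresponding face-stabilizer; this either cancels the error entirely or replaces its support by the stabilizer-equivalent pair on the other two edges of the face. Once SB has acted, the residual unsatisfied checks form a small, closely packed cluster whose decoding tree, built via the formalism already set up in the paper, does not fall into the blindness pattern underlying Theorem~\ref{thm:limitedscope}, and MS convergence can then be verified qubit-by-qubit.

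The main obstacle is to show that SB is harmless whenever it is not needed. Since SB triggers on a purely local syndrome signature, one must rule out spurious triggers coming from non-degenerate weight-$3$ errors whose syndromes accidentally contain a half-face pattern: the wrong local rewrite could replace a correctable error by an incorrect logical class. This calls for a careful specification of the SB rule (a priority order on detected signatures, and a stopping condition), together with a case-by-case verification that the post-SB instance, run through MS, still outputs the correct equivalence class. The bound $d\ge 7$ enters precisely here: it guarantees that the neighborhoods inspected by SB cannot overlap non-trivially through the torus topology, so the local analysis aggregates into a global correctness proof.
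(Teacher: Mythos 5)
There is a genuine gap, on two counts. First, you mischaracterize what the stabiliser-blowup does: you describe it as detecting a half-face signature and then \emph{flipping} the two candidate qubits along the face stabilizer, i.e.\ as an active local correction. In the paper SB is not a correction at all but a change of variables in the decoding graph: the four edges around a flagged plaquette are removed and replaced by four new edges joining the corner checks to a fresh satisfied check in the middle, each new variable being the parity of two old ones. The whole point is to let MS itself resolve the degenerate pair rather than committing to one representative --- the paper explicitly notes that hard-assigning qubit values ``by hand'' can degrade performance by inflating the weight of the output. Your version of SB would have to be analyzed for exactly the failure mode you yourself flag (spurious triggers replacing a correctable error by the wrong class), and your proposal acknowledges this obstacle without resolving it; the paper's greedy rule (patterns (i)--(iii) with the diagonal side-condition and the no-adjacent-blowup constraint) is precisely the ``careful specification'' you defer.

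Second, and more fundamentally, your locality claim is too weak to close the argument. You assert that $d\ge 7$ ensures the error support sits in a topologically trivial region so that ``only the local combinatorics matter and no wraparound interferes.'' But the MS \aposteriori value of a qubit at iteration $i$ is determined by a decoding tree of depth $i$, i.e.\ by walks of length $i$ in the lattice, and these wrap around the torus as soon as $i$ is comparable to $d$. Some weight-$3$ errors take up to $10$ iterations to converge, so a case-by-case verification on one small code does not transfer to other distances. The paper handles this with an exhaustive check for every $7\le d\le 28$ combined with Lemma~\ref{lem:sigma} (which requires $d\ge 2k+\delta$, giving the threshold $2\cdot 10+8=28$) and Lemma~\ref{lem:sigma1-sigma2} for well-separated components. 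Without an analogue of these lemmas, your ``verify MS convergence qubit-by-qubit'' step neither covers the small distances $7\le d<28$ uniformly nor lifts to arbitrary $d$.
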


It follows that the SB+MS yields a quadratic improvement in the logical error rate performance, as compared to MS only. In particular, it can be applied to quadratically reduce the number of calls to a possible post-processing step, which remains a must\footnote{According to the limited performances of state of the art MS improvements, of which the poor performances are likely being explained  by the local blindness property from Theorem~\ref{thm:limitedscope}.} for the MS-based decoding of toric codes of  minimum distance $d \ge 9$.

\section{Preliminaries}
\label{sec:preliminaries}

\subsection{The Decoding Problem}
\label{subsec:decoding-problem}

In the rest of the paper, we will restrict ourselves to a simple setup: decoding independent and identically distributed (i.i.d.) $X$ errors on the toric code. In this noise model, the \apriori values of all qubits are the same, and only depend on the probability of $X$ error on each qubit. Everything said here also applies to $Z$ errors.

Since the toric code is a CSS code, the maximum likelihood\footnote{Here we only discuss the ML decoding of quantum codes and not the quantum ML decoding, where one is interested in decoding the most likely equivalence class of errors. This simplification of the problem makes sense in the rest of the paper since our goal is to show that the decoder is unable to decode even constant sized errors, so we are not so much concerned about logical errors, but rather about decoding failures (inability to output a valid correction).} (ML)  decoding of $X$ errors can be stated as: for a matrix $\Hmat$, and a syndrome $\synd$, find the most likely error $\tilde{\evec}$ such that $\Hmat\tilde{\evec}=\synd$. Assuming an i.i.d. noise model, finding the most likely error is equivalent to finding the error of smallest weight.

A tie occur in the ML decoding if there are several smallest weight  (or in the general setting, most likely) errors satisfying the syndrome. If the decision is taken globally,  the ML decoder may output either the list of all such errors, or any one of them (\exgrata, randomly chosen). This is no longer possible for decoders based on local decisions, such as MP decoders, where each qubit is decided based on local information. In this case, it is more convenient to  consider the following \emph{localized} version of the ML decoder. For a matrix $\Hmat$, a syndrome $\synd$, and a qubit $q$: if all the smallest weight errors satisfying the syndrome agree on qubit $q$, that is, $\exists a\in\{0,1\}$, such that $\evec(q) = a$ for any smallest weight $\evec$ satisfying $\Hmat\evec = \synd$, then set $\tilde{\evec}(q) = a$; otherwise, set $\tilde{\evec}(q)$ to a random value in $\{0,1\}$. Obviously, if there is only one smallest weight error satisfying the syndrome, then both the ML and its localized variant output this error. Otherwise, the error outputted by the localized ML may actually not satisfy the given syndrome. 

\subsection{Degenerate Errors}

For a given syndrome $\synd$, we write $\synd(c)$ to denote the value of $\synd$ on check $c$.
If $\synd(c) = 1$, we say that check $c$ is \textbf{unsatisfied}, otherwise we say it is \textbf{satisfied}.

The notions of degenerate syndrome and degenerate error, introduced below, capture the fact that  there may be several errors of minimal weight explaining a given syndrome. Such a syndrome would intuitively be hard to decode, since it would be difficult for the decoder to arbitrate between those errors, as it is a classical decoder, not aware of the degeneracy of the code.

For a given parity-check matrix $\Hmat$, for any syndrome $\synd$, we denote by $\mathcal{E}(\synd)$ the set of errors satisfying this syndrome. $$\mathcal{E}(\synd) =  \{ \evec \enskip |  \Hmat\evec=\synd\}$$ 

Moreover, we denote by $|\evec|$  the (Hamming) weight of an error $\evec$, and  define $\mathcal{E}_{\min}(\synd)$ as the subset of errors of minimal weight. 
$$\mathcal{E}_{\min}(\synd) = \{ \evec \in \mathcal{E}(\synd) \mid |\evec| = \min_{\evec'\in \mathcal{E}(\synd)}  |\evec'|\}$$

\begin{definition}
    A syndrome $\synd$ is said \textbf{degenerate} if $|\mathcal{E}_{\min}(\synd)| > 1$.
\end{definition}
\begin{definition}
    
    An error $\evec$ is said \textbf{degenerate} if there exists $\evec' \neq \evec$ such that $|\evec'|\leq |\evec|$ and $\Hmat\evec' = \Hmat\evec$.
\end{definition}

\begin{lemma}
    If $\evec \in \mathcal{E}_{\min} (\synd)$, then $\evec$ is degenerate iff $\synd$ is degenerate.
\end{lemma}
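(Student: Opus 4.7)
The statement is an immediate unpacking of the two definitions, since membership of $\evec$ in $\mathcal{E}_{\min}(\synd)$ forces the inequality $|\evec'|\leq|\evec|$ (which appears in the definition of a degenerate error) to be an equality. The plan is to prove both implications directly from the definitions, and there is no real obstacle; the only point worth making explicit is the interplay between the $\leq$ in the definition of a degenerate error and the minimality condition defining $\mathcal{E}_{\min}(\synd)$.

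For the forward direction, I would assume $\evec$ is degenerate and pick a witness $\evec'\neq \evec$ with $|\evec'|\leq |\evec|$ and $\Hmat\evec'=\Hmat\evec=\synd$. Then $\evec'\in\mathcal{E}(\synd)$, and since $\evec\in\mathcal{E}_{\min}(\synd)$ we also have $|\evec|\leq|\evec'|$, forcing $|\evec'|=|\evec|$ and hence $\evec'\in\mathcal{E}_{\min}(\synd)$. Together with $\evec'\neq\evec$, this gives $|\mathcal{E}_{\min}(\synd)|\geq 2$, i.e.\ $\synd$ is degenerate.

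For the converse, I would assume $\synd$ is degenerate, so $|\mathcal{E}_{\min}(\synd)|>1$, and choose some $\evec'\in\mathcal{E}_{\min}(\synd)$ with $\evec'\neq\evec$ (which exists since there are at least two elements). By definition of $\mathcal{E}_{\min}(\synd)$, $|\evec'|=|\evec|$ and $\Hmat\evec'=\synd=\Hmat\evec$, so $\evec'$ witnesses that $\evec$ is degenerate. This completes the equivalence.
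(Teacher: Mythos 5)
Your proof is correct and is exactly the direct unpacking of the definitions that the paper intends (the paper in fact omits the proof as immediate); the key observation you make explicit — that minimality of $\evec$ upgrades the $\leq$ in the degenerate-error definition to equality, placing the witness $\evec'$ in $\mathcal{E}_{\min}(\synd)$ — is the whole content of the lemma. Nothing to add.
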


We also refine the classical notion of decoding radius to discuss more precisely the decoding of non-degenerate errors.

\begin{definition}
    The \textbf{non-degenerate} decoding radius $\omega$, for a given decoder and code, is the largest integer $\omega$ such that all non-degenerate errors of weight $\le \omega$ are correctly decoded. 
\end{definition}

\subsection{The MS Decoder}

For LDPC codes, the two main MP decoders used are the BP and the MS. On acyclic graphs, the first solves the maximum \aposteriori decoding problem, while the second solves the \emph{localized} ML decoding problem, discussed above (see also~\cite{Wiberg96}). Algorithmically, they both function by iteratively exchanging messages on the Tanner graph of the code. They are usually described in an algorithmic way, by specifying the functions used to compute the exchanged messages, \textcolor{black}{and they can be analysed by means of decoding (or computation) trees, introduced by Wiberg~\cite{Wiberg96}. }
We refer to~\cite{Wiberg96} for the algorithmic description of the MS and the definition of the decoding tree in the general setting, \textcolor{black}{and provide below the definition of the decoding tree for the particular case of the toric code.}

It is straightforward to see that the localized ML corrects all the non-degenerate errors, and fails (with high probability\footnote{Since ties are resolved randomly in the localized ML, there is a non-zero probability that such random choices yield a  valid minimum-weight explanation of the syndrome.}) on the degenerate ones. Although the MS decoder aims at solving the localized ML decoding problem, we show in this paper that it fails even to decode some non-degenerate errors of small weight on the toric code. However, counter intuitively, the MS is also able to correct some degenerate errors. A discussion on the decoding of degenerate errors by the MS on the toric code can be found in Appendix~\ref{app:add-res-degen}.

\subsection{The Decoding Tree of the Toric Code}

\textcolor{black}{From now on, all the definitions and results presented apply to the toric code, even when this is not specified explicitly.}
We represent the toric code as a square tiling of the torus, \idest, a two-dimensional square lattice with periodic boundaries,  where the qubits are located on the edges, the $Z$ checks on the vertices, and the $X$ checks or the plaquettes. Note that the definition of $X$ and $Z$ checks is reversed by considering the dual lattice. Thus, we shall only consider here one type of checks, and we assume they are located on the vertices of the lattice. A toric code is depicted in Figure~\ref{fig:toric_tiling}  (or see \cite{kitaev2003fault} for a formal introduction). 

For the graph representation of the toric code, and more generally for any graph $G=(V,E)$, we use the notation $\mathcal{N}(\enskip)$ to refer to the set of edges incident to a vertex, or the set of vertices incident to an edge, clear from the context:
$$\mathcal{N}(v) = \{e\in E \mid e \text{ incident to } v\}, \qquad \mathcal{N}(e) = \{v\in V \mid v \text{ incident to } e\}$$

As mentioned above, the MS algorithm converges to a localized ML decision if the Tanner graph of the code is a tree~\cite{Wiberg96}. This is not true if the Tanner graph contains cycles, however, there is a neat way to explicitly compute the \aposteriori  value that the MS will output for every qubit at every iteration, using the notion of minimal configurations on decoding trees. Those ideas where first presented in \cite{Wiberg96}, and we quickly review here what will be necessary for the proofs of the next section. 

\begin{figure}[!th]
\centering
\begin{subfigure}{0.5\textwidth}
    \includegraphics{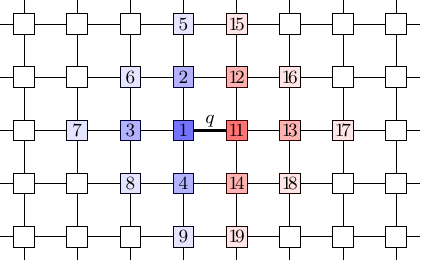}
    \caption{Square tiling of the torus}
    \label{fig:toric_tiling}
\end{subfigure}

\begin{subfigure}{\textwidth}
\centering
    \includegraphics{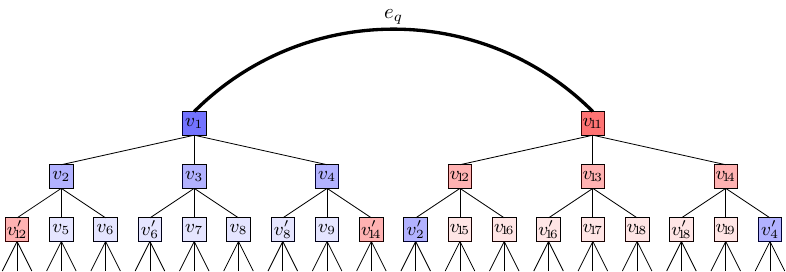}
    \caption{Decoding tree $\Tiq$ of qubit $q$ at iteration $i=3$}
    \label{fig:toric_dectree}
\end{subfigure}
\caption{Decoding tree of the toric code. (a) Toric code defined by a square lattice with periodic boundaries. Checks are represented by squares, placed on the lattice vertices, and qubits are represented by  edges.  (b) The decoding tree of qubit $q$  is shown for three decoding iterations.  A check $c$ from the original lattice may have several associated vertices in the decoding tree, denoted as $v_c$, $v'_c$. Similarly, each qubit from the original lattice may have several associated edges in the decoding tree (\exgrata, the edge $(v_1,v_2)$ and one of the dangling edges of $v'_2$ are both associated with the qubit incident to checks $1$ and $2$ in the original lattice). We consider the decoding tree as being rooted in the edge $e_q$, associated with  qubit $q$ from the original lattice (see main text).  
}
\label{fig:dectree}
\end{figure}

For a given qubit $q$, one may recursively trace back through time the computation of its \aposteriori  value, by examining the updates that have occurred. This trace back will form a tree graph rooted at $q$ and consisting of interconnected qubit and checks in the same way as in the original graph, but the same qubit or checks may appear at several places in the tree because of the loopy structure of the decoding graph. For toric codes, checks are represented by vertices of a square lattice, while qubits are represented by edges (Figure~\ref{fig:toric_tiling}). Accordingly,  the decoding tree of the toric code   will have vertices and edges representing, respectively, checks and qubits at different decoding iterations.  Decoding trees can be formally defined by using the notion of walk~without~return, introduced below.

\begin{definition}
    A \textbf{walk} $W$ in a graph $G = (V,E)$ as a sequence of edges $W = e_1 e_2\dots e_k$, for which there exists a sequence of vertices $v_1 v_2\dots v_{k+1}$, such that
    $$
        e_i = (v_i, v_{i+1}), \quad \forall i\in\{1, k\}.
    $$
 We say that $W$ is a \textbf{walk~without~return~(wwr)} if $\enskip v_{i} \neq v_{i+2}, \enskip \forall i \in \{1,\ldots, k-1\}$. The length $|W|$ of a walk is the number of edges it contains.
\end{definition}

\begin{definition}
    A \textbf{path} is a walk such that all edges are distinct.
\end{definition}

\begin{definition}\label{def:decoding-tree} (For the toric code)
The \textbf{decoding tree} $\Tiq$ is composed of  all the wwr of length $i$ in the toric code lattice, starting with edge $\edgeq$ (associated to qubit $q$).\smallskip

When necessary, we also subscript the vertices with the name of the associated check, so $u_c \in V(\Tiq)$ is a vertex in $\Tiq$ associated to check $c$.
\end{definition}

A decoding tree is illustrated in Figure~\ref{fig:toric_dectree}. Note that we consider this tree as being rooted in the edge $e_q$, and that it ends with \textbf{dangling edges} (defined below). Technically, we could insert a qubit vertex on each edge, such that the decoding tree would be rooted in a qubit vertex, and terminated with qubit vertices. We will not do this here, and will  rather stick to the usual convention for toric codes, where qubits are represented by edges. Remark that since a decoding tree is defined by walks without return, each vertex $u$ has exactly 3 children (see Figure~\ref{fig:toric_dectree}).

\begin{definition}
    A \textbf{dangling edge} of $\Tiq$ is an edge $e$ such that $|\mathcal{N}(e)| = 1$.
\end{definition}

In the context of decoding trees, we naturally extend the definitions of \textbf{walk}, \textbf{walk~without~return} and \textbf{path}, that can now eventually start/end with dangling edges.

\begin{definition}
    The distance between two vertices in  $\Tiq$ is the length of the path (the number of edges) from one to the other. 
    The distance of a vertex to the bottom is the minimum over the length of all paths that start with that vertex and end with a dangling edge.
\end{definition}

Since the graph representation of the toric code contains loops, several vertices/edges in the decoding tree may be associated to the same check/qubit of the toric code. 
Therefore, we extend the notation of the syndrome to vertices of $\Tiq$ as: $\synd(v_c) = \synd(c),  \forall v_c \in V(\Tiq)$.

\begin{definition}\label{def:config}
For a decoding tree $\Tiq$ and a syndrome $\synd$, a \textbf{configuration} is a function  $C:E(\Tiq)\rightarrow \{0,1\}$ such that:
$$  \sum_{e \in \mathcal{N}(v)} C(e)   = \synd(v) \bmod 2, \enskip \forall v \in V(\Tiq)$$
If $C(e)=1$, the edge $e$ is said to be \textbf{labeled} otherwise it is \textbf{unlabeled}.

Additionally, we are interested to know the labeling of the root edge of the tree, and will call a \textbf{root-labeled} configuration $\Cx$ if $ \Cx(\edgeq) = 1$ and a \textbf{root-unlabeled} configuration $\Co$ if $\Co(\edgeq) = 0$.
\end{definition}

\begin{definition}
We define the \textbf{ weight } of a configuration as:
$$|C| = \sum_{e\in E(\Tiq)} C(e) \quad \in \mathbb{N}$$
\end{definition}

 Let $\mathfrak{C}$ be the set of all configurations for decoding tree $\Tiq$ and syndrome $\synd$. We write
$$\mathfrak{C} = \mathfrak{C}_\bullet \cup \mathfrak{C}_\circ$$ where $\mathfrak{C}_\bullet $ and $ \mathfrak{C}_\circ$  are the subset of root-labeled/unlabeled configurations. Theses sets are non-empty since it is always possible to define a configuration starting from the root down to the dangling edges.

\begin{definition}
A \textbf{minimal root-labeled/unlabeled configuration} $\Co^*$/$\Cx^*$ is a configuration such that 
$$|\Co^*| = \min_{\Co \in \mathfrak{C}_\circ} |C| \quad / \quad |\Cx^*| = \min_{\Cx \in \mathfrak{C}_\bullet} |C|$$
\end{definition}

For i.i.d. noise, all qubits have the same \apriori value, which factors out through all the MS decoding steps, and thus can be  set  to 1 (which  amounts to saying that the maximum likelihood error is the minimum weight one). This will be implicitly assumed in the next theorem and throughout the rest of the paper.

\begin{restatable}[\!{\cite[Section 4.1]{Wiberg96}}\,]{thm}{iid}\label{theo:Wiberg96}
For i.i.d. noise, and given minimal configurations $\Co^*$ and $\Cx^*$ for the decoding tree $\Tiq$ and syndrome $\synd$, 
the \aposteriori  value $\app(q,i)$ computed by the MS for qubit $q$  at iteration $i$ is given by $\app(q,i) = |\Cx^* |- |\Co^*|$. 
\end{restatable}

\section{The MS Limited Scope}\label{sec:limited-scope}

In this section we prove Theorems \ref{thm:limitedscope} and~\ref{thm:4inarow}.
We first prove the first Theorem that introduces all the ideas and is quite short and intuitive.
Building on the ideas of the first Theorem, together with some technical Lemmas, we then show the second Theorem.

\subsection{Theorem~\ref{thm:limitedscope}: MS Local Blindness}
\label{subsec:local-blindness}

 Consider an error syndrome $\synd$, and let $c$ be an unsatisfied check, \emph{\idest}, such that $\synd(c)=1$.  Let also $\sc$ denote the binary vector, such that 
 $\sc(c')= \begin{cases} 
 1, &   \text{if $c=c'$}\\[-1ex]
 0, & \text{otherwise}
 \end{cases}
 $.
  
 Note that $\sc$ is not a valid error syndrome, as no error can generate it. If one runs  MS or any other MP decoder on $\sc$, the decoder will run for infinitely many iterations, without finding a valid explanation of the input ``fake'' syndrome $\sc$. We are interested in the \aposteriori   values computed by the  decoder for the qubits neighboring $c$. We denote by  $\app(q, i)$  the \aposteriori  value  of a qubit $q$ at iteration $i$, when the decoder is supplied with the real error syndrome $\synd$. Likewise, we denote by  $\app^c(q, i)$  the \aposteriori  value  of a qubit $q$ at iteration $i$, when it is supplied with the fake error syndrome $\sc$.
 
 \begin{definition}
Consider the MP decoding of a syndrome $\synd$, and let $c$ be un unsatisfied check. We say that \textbf{the decoding of $\synd$ is locally blind in the neighborhood of $c$}, if $\app(q, i) = \app^c(q, i)$, for any qubit $q$ neighboring $c$, and any iteration  $i \ge 0$. 

 \end{definition}

 We restate here our first Theorem.
 \limitedscope*

We will prove Theorem~\ref{thm:limitedscope} by investigating the minimal configurations in the decoding tree.  We  introduce first several  definitions and prove some preliminaries Lemmas, and delay the proof of Theorem~\ref{thm:limitedscope} to Section~\ref{sec:proof-thm-MS-local-blindness}.

\subsubsection{Links and Alternating Chains}

Below, we give a few definitions, and adapt some usual definitions in graph theory to cope with the  specificity of decoding trees (taking into account dangling edges).

\begin{definition}
A \textbf{link} is a path in the decoding tree such that its vertex endpoints (if they exist) are associated  with unsatisfied checks.

A link can have dangling edges as endpoints. 

\smallskip

We call a \textbf{proper link} a link with both endpoints associated to unsatisfied checks, and a \textbf{dangling link} a link where at least one of its endpoints is a dangling edge. 
For short, we usually only give the  vertex endpoints of a link. Giving the endpoints of a proper link is enough to fully characterize it, so we write  $u \link v$. For a dangling link, we write $ u \link \bottom$. Although this is not enough to uniquely determine a dangling link (there exist many dangling links with endpoint $u$), whenever this notation is used, it is always considered that this is one of the possible dangling links of minimal length. 

\end{definition}

\begin{definition}
Below we give a name to the 3 possible shapes of proper links of length 4 not containing the root (see Figure~\ref{fig:4links-shapes}): 

\begin{itemize}
    \item The $I$-link going from an unsatisfied check 4 times down,
    \item The $\Gamma$-link going once up then 3 times down,
    \item The $\Lambda$-link going twice up then twice down.
\end{itemize}
For the $I$ and $\Gamma$ link, we naturally refer to the upper and lower unsatisfied vertices relative to their depth in the tree.
\end{definition}

\begin{figure}[ht]
\centering
\begin{subfigure}{0.3\textwidth}
    \centering
    \includegraphics{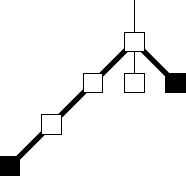}
    \caption{$\Gamma$ link}
    \label{fig:gamma-link}
\end{subfigure}%
\hfill%
\begin{subfigure}{0.3\textwidth}
    \centering
    \includegraphics{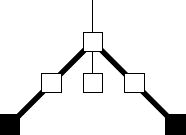}
    \caption{$\Lambda$ link}
    \label{fig:lambda-link}
\end{subfigure}%
\hfill%
\begin{subfigure}{0.35\textwidth}
    \centering
    \includegraphics{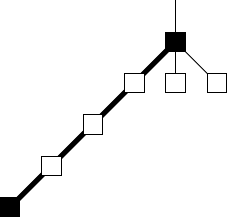}
    \caption{$I$ link}
    \label{fig:iota-link}
\end{subfigure}

\caption{The three different shapes of 4-link are depicted in bold. 
Square vertices represent checks (black for unsatisfied and white for satisfied).
Note that in the following the bold notation will only be used to denote labeled links.}
\label{fig:4links-shapes}
\end{figure}

\begin{definition}
    A \textbf{labeled link} is a link where all edges are labeled. 
    An \textbf{unlabeled link} is a link where at least one edge is unlabeled. 
    From now on we use the notation $\xlink$ to denote a labeled link, and $\link$ to denote an unlabeled link.
\end{definition}

\begin{definition}
An \textbf{alternating chain} is a walk in the tree consisting of a succession of links alternating between labeled and unlabeled (see Figure~\ref{fig:alternatingchain}). Subsequent links can have overlapping edges.

Precisely, an \textbf{alternating chain} is a sequence of links $L_1, L_2, \dots, L_k$ where:\\
(vertices in parenthesis are present only if the incident edge is not dangling)
\begin{align*}
&\forall L_i= (v_0)e_0\dots e_{t-1} v_{t}, \enskip L_{i+1} =  v'_{0} e'_{0}\dots e'_{t'-1}(v'_{t'}) : \qquad v_{t} = v'_{0}\\
&L_i, L_{i+1} \qquad\text{cannot be both labeled or both unlabeled}
\end{align*}

\end{definition}

\begin{figure}[ht]
\center
\includegraphics{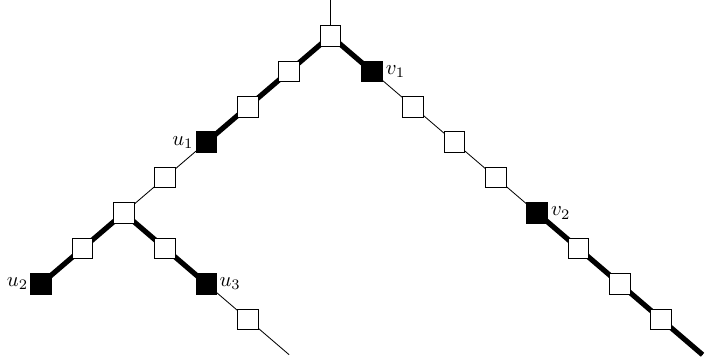}
\caption{An example of an alternating chain $(\_ \link u_3 \overset{\Lambda}{\xlink} u_2 \link u_1 \overset{\Gamma}{\xlink} v_1 \link v_2 \overset{I}{\xlink} \_)$ going from one dangling edge to another (Notice how the two edges above $u_2$ are used twice, in $u_1 \link u_2$ and $u_2 \xlink u_3$). 
Labeled edges are depicted in bold.
All three types of 4-links are depicted here, the $\Lambda$ and $\Gamma$ links are \textbf{proper}, and the labeled $I$-link is \textbf{dangling}.} 
\label{fig:alternatingchain}
\end{figure}
\begin{lemma}[Walk Inversion Lemma] \label{lem:walk-inversion}
Let $C$ be a configuration for a decoding tree $\Tiq$ and syndrome $\synd$, and $W=e_1\dots e_k$ a walk  in $\Tiq$ such that both its endpoints are dangling edges. 
 Let $C':E(\Tiq)\rightarrow \{0,1\}$ be the map  defined by
 $$\forall e \in E(\Tiq),\, C'(e) = (C(e)+n_e) \mod 2,\quad \text{where } n_e=|\{e_i=e, e_i \in W\}|$$ 
 Then  $C'$ is a configuration for $\synd$.  
Put differently, inverting the label of each edge as many times as it appears in the walk preserves the property of being a configuration for a syndrome.
\end{lemma}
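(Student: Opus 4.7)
The plan is to observe that the update described by the lemma flips the label of each edge $e$ precisely when $n_e$ is odd, so it can be written as
\[
C'(e) \;\equiv\; C(e) + n_e \pmod 2,
\]
and in particular $C'$ takes values in $\{0,1\}$. Since $C$ already satisfies $\sum_{e\in\mathcal{N}(v)}C(e)\equiv \synd(v)\pmod 2$ at every $v\in V(\Tiq)$, establishing the lemma reduces to verifying that the correction term $\sum_{e\in\mathcal{N}(v)} n_e$ is even at every vertex $v$.

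To do this, I would fix $v$ and rewrite
\[
\sum_{e\in\mathcal{N}(v)} n_e \;=\; \bigl|\{\,j\in\{1,\dots,k\} : v \text{ is an endpoint of } e_j\,\}\bigr|,
\]
and then scan the walk step by step using the vertex sequence $v_1,\dots,v_{k+1}$. Each index $j$ with $v_j = v$ and $1 < j < k+1$ contributes $2$ to this count, since both $e_{j-1}$ and $e_j$ are incident to $v$; an endpoint occurrence $v_1 = v$ contributes only $e_1$, and $v_{k+1} = v$ contributes only $e_k$. Thus, modulo $2$, the correction term equals the number of times $v$ appears among $\{v_1,v_{k+1}\}$.

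The dangling hypothesis on $e_1$ and $e_k$ is exactly what kills these two boundary terms. A dangling edge has a single real vertex neighbor, so one can regard its ``other'' side as a virtual bottom $\bot\notin V(\Tiq)$ carrying no parity constraint; taking $v_1$ and $v_{k+1}$ to be those virtual vertices, no real vertex $v\in V(\Tiq)$ can coincide with $v_1$ or $v_{k+1}$, and the correction term is automatically even. Hence $\sum_{e\in\mathcal{N}(v)}C'(e)\equiv \synd(v) \pmod 2$ for every $v$, and $C'$ is a configuration for $\synd$. The only point requiring real care is this boundary bookkeeping at the two ends of the walk, for which the dangling assumption is tailor-made; the rest of the argument is essentially the folklore fact that a closed (or dangling-capped) walk induces an even-degree subgraph modulo $2$.
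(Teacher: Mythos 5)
Your proof is correct and follows essentially the same route as the paper's (one-sentence) argument: every vertex of $\Tiq$ meets an even number of walk-steps of $W$, counted with multiplicity, because internal visits contribute two incidences and the two boundary contributions vanish thanks to the dangling-edge endpoints. Your version merely makes explicit the double-counting and the virtual-endpoint bookkeeping that the paper leaves implicit.
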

\begin{proof}
    Any vertex touches an even number of edges (counting multiplicities) in $W$. 
    Hence inverting the labels as many times as they appear along $W$ will keep the parity of all the vertices in the new configuration.
\end{proof}
\subsubsection{Minimal Configurations in the MS Decoding Tree}

In this section we prove the existence of minimal configurations with special properties, which will be used later in the proof of  Theorems~\ref{thm:limitedscope} and~\ref{thm:4inarow}.

\begin{lemma}[4-links Lemma for root-unlabeled configuration]\label{lem:unlabeled-ge4} 
 Given a syndrome where all unsatisfied checks are at distance 4 or more from one another, there exist a minimal configuration $\Co^{\leq 4}$ where all labeled links are of length $\leq 4$.
\end{lemma}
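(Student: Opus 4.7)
The plan is to run an exchange argument powered by the Walk Inversion Lemma (Lemma~\ref{lem:walk-inversion}). I will start from an arbitrary minimal root-unlabeled configuration $\Co^*$, assume it contains a labeled link $L = u \xlink v$ of length $k \geq 5$, and modify $\Co^*$ into another minimal root-unlabeled configuration in which the monovariant $\Phi = \sum_{L'} \max(|L'| - 4, 0)$, taken over its labeled links $L'$, is strictly smaller. Iterating drives $\Phi$ to zero, producing the claimed $\Co^{\leq 4}$.

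The concrete modification goes as follows. First, the labeled edges of any configuration decompose into maximal labeled paths, i.e., the labeled links, and since $\Co^*$ is root-unlabeled the edge $\edgeq$ is not in $L$, so $u$ and $v$ lie in the same subtree hanging off one endpoint of $\edgeq$. Next, pick paths $P_u$ from $u$ and $P_v$ from $v$ that stay in this subtree, end at dangling edges, avoid $L$, and consist entirely of edges unlabeled in $\Co^*$. The concatenation $W = P_u^{-1} \cdot L \cdot P_v$ is then a walk in $\Tiq$ between two dangling edges that does not traverse $\edgeq$. Applying Walk Inversion to $W$ flips the labels of the edges of $L$ (from labeled to unlabeled) and of $P_u, P_v$ (from unlabeled to labeled), producing a new root-unlabeled configuration $\Co'$ with $|\Co'| = |\Co^*| - k + |P_u| + |P_v|$. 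Minimality of $\Co^*$ forces $|P_u| + |P_v| \geq k$, and whenever equality is realized with $|P_u|, |P_v| \leq 4$, the configuration $\Co'$ is also minimal and has replaced the length-$k$ proper link $L$ by two dangling labeled links of length $\leq 4$, so $\Phi$ drops by at least $k - 4 \geq 1$.

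The main obstacle is ensuring that the required $P_u, P_v$ always exist. When $u$ and $v$ lie close to the dangling boundary of $\Tiq$, each of the three branches at $u$ (resp.~$v$) avoiding $L$ carries a length-$\leq 4$ path to a dangling edge, and a counting argument on labeled incidences at the satisfied-check vertices internal to those branches---using minimality of $\Co^*$ together with the hypothesis that all unsatisfied checks in the lattice are at distance $\geq 4$ from each other---shows that at least one such path on each side may be taken unlabeled. The difficult regime is when $u$ or $v$ sits near the top of the tree, where no short downward escape exists without crossing $\edgeq$. In that case I would instead reroute the walk through a twin copy $u'_c$ of the unsatisfied check associated with $u$: the square lattice supports closed walks without return of length exactly $4$ around every vertex, so a twin copy at tree distance exactly~$4$ always exists, and using it lets one trade a portion of $L$ for a length-$4$ proper labeled link between two copies of the same unsatisfied check. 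Verifying that each such local exchange neither flips $\edgeq$ nor creates a new labeled link of length $>4$ elsewhere is the technical crux; I expect it to proceed by a case analysis on the shape of $L$ (roughly an $I$-, $\Gamma$-, or $\Lambda$-like subpath at its highest vertex) combined with the induction on $\Phi$.
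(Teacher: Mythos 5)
Your high-level skeleton---remove a labeled link of length $\ge 5$ by inverting a dangling-edge-to-dangling-edge walk through it (Lemma~\ref{lem:walk-inversion}), check minimality is preserved, and iterate---matches the paper's. The genuine gap is in how your walk reaches the dangling boundary. You require the escape paths $P_u$, $P_v$ to consist \emph{entirely of unlabeled edges} and to satisfy $|P_u|,|P_v|\le 4$ with $|P_u|+|P_v|=k$. But the endpoints of a long labeled link can sit at arbitrary depth inside $\Tiq$, not only ``near the top'': whenever $u$ is at distance $\ge 5$ from the bottom, every path from $u$ to a dangling edge has length $\ge 5$, so no admissible $P_u$ exists; and inverting a long all-unlabeled path would both destroy minimality (the weight increases by $|P_u|+|P_v|-k>0$) and itself create a labeled link of length $>4$. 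This failure mode is generic, not a corner case. Your fallback of ``rerouting through a twin copy $u'_c$'' gestures at the right fix but does not close the walk: after trading a portion of $L$ for a labeled $4$-link ending at $u'_c$, the walk still terminates at an internal vertex, where the inversion would violate the parity constraint, so the construction must be iterated all the way down to the boundary.

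What the paper actually builds is an \emph{alternating chain}: from each endpoint it alternately appends an unlabeled proper $I$-link of length $4$ to a twin copy of the same unsatisfied check (guaranteed by Lemma~\ref{lem:exists-I} when far from the bottom) and then a maximal labeled $4$-link, descending until an unlabeled dangling link of length $\le 2$ is available (Lemma~\ref{lem:exists-unlabeled-dangling}). Under inversion the interior labeled and unlabeled $4$-links swap in pairs at zero net weight cost, so only the two chain ends contribute to the weight change; even there the bookkeeping is delicate, requiring a case analysis of the possible end types and a parity argument to exclude the one combination whose added weight would not be compensated by deleting a length-$5$ link. Your proposal contains neither this iterated descent, nor its termination argument, nor the end-of-chain weight/parity accounting, nor a justification that the exchange creates no new long labeled links (the paper handles this by always choosing the long link robustly maximal and as deep as possible). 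These omissions are exactly where the difficulty of the lemma lives, so as written the argument does not go through.
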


\begin{proof}[Proof (Sketch).]
We give the proof for the existence of $\Co^{\leq 4}$ in the form of an algorithm that takes any minimal configuration $\Co^*$ and transforms it into a minimal configuration $\Co^{\leq 4}$ that satisfies the condition that all labeled links are of length at most 4. 
The algorithm works as follows: as long as there is a labeled link of length $\ge 5$, we find a path from dangling edge to dangling edge containing it, then use Lemma~\ref{lem:walk-inversion} to remove it. The path is chosen so that the new configuration stays minimal.
The details of the proof can be found in Appendix \ref{app:unlabeled-ge4-fullproof}
\end{proof}

We now prove a similar statement for $\Cx$, but only for decoding trees associated with any of the four qubits neighboring an unsatisfied check\footnote{It can be shown that the statement  does not hold for any other qubit.}. 

\begin{lemma}[4-link Lemma for root-labeled configuration]\label{lem:labeled-ge4}  Given a syndrome where all unsatisfied checks are at distance 4 or more from one another, for a decoding tree $\Tiq$ where $q$ is neighboring an unsatisfied check $c$, there exists a minimal configuration $\Cx^{\leq 4}$ where all labeled links are of length $\leq 4$. 
\end{lemma}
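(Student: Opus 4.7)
The plan is to adapt the algorithmic strategy of Lemma~\ref{lem:unlabeled-ge4} to the root-labeled setting: starting from an arbitrary minimal root-labeled configuration $\Cx^*$, iteratively apply the Walk Inversion Lemma (Lemma~\ref{lem:walk-inversion}) along carefully chosen dangling-to-dangling walks in $\Tiq$ to shorten every labeled link of length $\geq 5$, while preserving both weight-minimality and, crucially, the label $\Cx(e_q) = 1$ on the root edge.

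The key structural observation is that $e_q$ is itself a dangling edge of $\Tiq$, so in any walk it can appear at most once and only as a walk endpoint; consequently, walk inversion along any walk whose two endpoints are dangling leaves \emph{other than} $e_q$ automatically preserves the label of $e_q$. The hypothesis that $q$ neighbors the unsatisfied check $c$ guarantees that $v_c$ is unsatisfied in $\Tiq$, so the labeled edge $e_q$ already constitutes a dangling labeled link of length~$1$. Under the natural convention that labeled links are maximal (they do not pass through intermediate unsatisfied vertices), no labeled link of length $\geq 5$ can contain $e_q$, since extending a labeled path from $e_q$ past $v_c$ would force $v_c$ to be such an intermediate vertex. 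Every long labeled link $L$ therefore lies entirely among the three subtrees descending from $v_c$, and for each such $L$ I will produce a dangling-to-dangling walk $W$ containing $L$ whose endpoints avoid $e_q$: either by extending $L$ on both sides inside its own subtree; or, if $L$ has $v_c$ as its upper link-endpoint, by extending upward from $v_c$ into one of the \emph{other} child subtrees down to a dangling leaf. The parity constraint at $v_c$, which forces the number of labeled edges among its three downward children to be even ($0$ or $2$) in any root-labeled configuration, guarantees that such an alternative subtree extension is always available.

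The main obstacle, just as in the root-unlabeled case, is verifying that $W$ can always be chosen so that walk inversion does not increase the total weight and does not create any new labeled link of length $\geq 5$. This is essentially the same case analysis developed in Lemma~\ref{lem:unlabeled-ge4}, which I would import with the single added proviso of always choosing extensions that descend into the subtrees of $v_c$ rather than ascending through $e_q$; the $I$/$\Gamma$/$\Lambda$ classification of $4$-links can be used to constrain the shape of the residual labeled links after each inversion. Termination follows from the fact that each iteration strictly decreases a measure such as the number of labeled links of length $\geq 5$ (or their total length), and the terminal configuration is the desired minimal root-labeled $\Cx^{\leq 4}$ with every labeled link of length $\leq 4$.
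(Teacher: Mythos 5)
Your proposal rests on a false structural premise that causes it to skip the entire content of the lemma. The root edge $e_q$ is \emph{not} a dangling edge of $\Tiq$: it has two vertex endpoints, $u_c$ (associated with the unsatisfied check $c$) and a second vertex $w$ (associated with the other check neighboring $q$), each with three children (see Figure~\ref{fig:toric_dectree}). Consequently your claim that ``no labeled link of length $\geq 5$ can contain $e_q$'' is wrong, even under your own convention on intermediate vertices: a labeled link $u_c \xlink v$ may start at $u_c$, traverse $e_q$ to $w$, and descend at least $4$ further edges to an unsatisfied vertex $v$; this link has length $\geq 5$, contains $e_q$, and has no intermediate unsatisfied vertices. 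Handling exactly this case is the reason a separate lemma is needed for root-labeled configurations (and the reason the hypothesis that $q$ neighbors an unsatisfied check appears): the algorithm of Lemma~\ref{lem:unlabeled-ge4} cannot simply be applied to a long labeled link through the root, because inverting it would unlabel $e_q$ and yield a root-unlabeled configuration. Since your plan reduces everything else to ``the same case analysis developed in Lemma~\ref{lem:unlabeled-ge4}'', the one case that genuinely requires new work is never addressed.

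For comparison, the paper's proof first runs the algorithm of Lemma~\ref{lem:unlabeled-ge4} to remove all long labeled links \emph{not} containing the root, and then treats a long labeled link through the root separately. The parity observation you make at $u_c$ appears there, but to a different end: since $e_q$ is labeled and $u_c$ is unsatisfied, either $0$ or $2$ of the children edges of $u_c$ are labeled, and $2$ is impossible because it would create a labeled link of length $\geq 8$ not containing the root; hence all children edges of $u_c$ are unlabeled and the long link necessarily descends through $w$. One then reroutes the labeled mass around a plaquette onto an unlabeled $4$-link $u_c \link v'_c$ through $e_q$ (so that after inversion $e_q$ remains labeled, now as part of a $4$-link ending at another copy $v'_c$ of $c$), with a short case analysis on how the alternating chain terminates near $v'_c$ and $v$ to preserve minimality. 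You would need to supply this rerouting argument; as written, your proof does not establish the lemma.
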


\begin{proof}[Proof (Sketch).]
Starting with a minimal configuration $\Cx^*$, and using the algorithm from the proof of Lemma~\ref{lem:unlabeled-ge4}, one can remove all the length $\ge 5$  labeled links not containing the root. Note that even if the algorithm could be applied on a labeled link containing the root, this would yield an unlabeled configuration which is not what we want.
Supposing that there is a labeled link containing the root of length $\ge 5$, it will also be removed using the walk inversion Lemma, but the path used will be created differently.
The details of the proof can be found in Appendix \ref{app:labeled-ge4-fullproof}. 
\end{proof}

\begin{lemma}\label{lem:c2c}
    Given a syndrome where all unsatisfied checks are all at distance at least 5, and a configuration $C^{\le 4}$ such that all labeled links are of length $\le 4$, then all proper labeled links have endpoints associated with a same check $c$ (\idest, all proper links are of the form  $u_c \xlink v_c$ for some unsatisfied check $c$).
\end{lemma}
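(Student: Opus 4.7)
The plan is to apply a simple projection argument from the decoding tree back to the original toric code lattice, combined with the distance hypothesis on unsatisfied checks.

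First I would unpack what needs to be shown. Let $L$ be a proper labeled link in $C^{\le 4}$. By the definition of a proper link, its two vertex endpoints $u_c, v_{c'} \in V(\Tiq)$ are associated with unsatisfied checks $c$ and $c'$ of the toric code, and by the defining property of $C^{\le 4}$ we have $|L| \le 4$. The goal reduces to showing that $c = c'$.

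Next I would invoke the structure of $\Tiq$ from Definition~\ref{def:decoding-tree}: vertices and edges of $\Tiq$ correspond to walks without return in the toric code lattice starting from the root edge $e_q$, with each vertex $u_{c''}$ recording the lattice check reached at the corresponding prefix. Consecutive edges of a path in $\Tiq$ therefore correspond to consecutive edges of a walk in the lattice, so a path of length $\ell$ in $\Tiq$ from $u_c$ to $v_{c'}$ projects to a walk of length $\ell$ from $c$ to $c'$ in the lattice. Applying this to $L$ yields a walk from $c$ to $c'$ in the lattice of length at most $4$, so $d(c, c') \le 4$ in the toric code graph.

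Finally, by hypothesis any two distinct unsatisfied checks are at distance at least $5$. The inequality $d(c, c') \le 4$ therefore forces $c = c'$, and the link is of the form $u_c \xlink v_c$ as claimed.

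The main (and only) subtlety is to state the projection step cleanly: one must note that the length of a path between two vertices of $\Tiq$ equals the length of its image walk in the lattice, which is immediate from the definition of the decoding tree but is worth spelling out. Beyond that, the lemma is a direct consequence of the distance hypothesis and of the $\le 4$ length bound on labeled links.
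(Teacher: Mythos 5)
Your proof is correct and follows essentially the same route as the paper's: project the link to a walk of the same length in the lattice, conclude $d(c,c')\le 4$, and invoke the distance-$\ge 5$ hypothesis to force $c=c'$. The paper's version is just a one-line compression of this argument (adding only the observation that such a link traces a loop around a plaquette).
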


\begin{proof}
    Since all labeled links are of length $\leq 4$ and all unsatisfied checks are at distance $\ge 5$, any proper labeled link necessarily joins vertices associated to the same check (by following a path around a plaquette of the toric code). 
\end{proof}

\subsubsection{Proof of Theorem~\ref{thm:limitedscope} (MS Local Blindness)}
\label{sec:proof-thm-MS-local-blindness}

Using configurations  from Lemma~\ref{lem:unlabeled-ge4} and~\ref{lem:labeled-ge4}, and the property of the labeled links from Lemma~\ref{lem:c2c}, we can now prove Theorem~\ref{thm:limitedscope}.

\begin{proof}

Let $\synd$ be a syndrome such
that all the unsatisfied checks are at distance $
\ge 5$ from each others, $c$ be an unsatisfied check in $\synd$, and $q \in \mathcal{N}(c)$. 
Let $\sc$ be the fake syndrome (defined in Section~\ref{subsec:local-blindness}) where only $c$ is unsatisfied , and $\synd^{\neq c} = \synd - \synd^c$ (informally, we can think of $\synd$ as the disjoint union of $\synd^c$ and $\synd^{\neq c}$).

Let also $\Tiq$ be the decoding tree associated to $q$ at iteration $i$. 
On  $\Tiq$, we now consider some minimal configurations $\Cx, \Co$ associated to syndrome $\synd$, and  $\Cxc,\Coc$ associated to  $\sc$, such that all four configurations satisfy Lemmas~\ref{lem:unlabeled-ge4}, \ref{lem:labeled-ge4}, and~\ref{lem:c2c}.
Now, considering $\Cx$ (the  same will hold for $\Co$), we define ${\Cx}_{|c},{\Cx}_{|\neq c}:E(\Tiq)\rightarrow\{0,1\}$  where: 

\begin{itemize}
    \item[(i)] ${\Cx}_{|c}$ is defined by the labeled links in $\Cx$ that have all their vertex endpoints equal to $c$ (\idest, the proper labeled links of the form $u_c \xlink v_c$ and the dangling labeled links of the form $u_c \xlink \bottom$).
    Formally, $${\Cx}_{|c}(e)=1 \iff e \in u_c \xlink v_c \text{ or } e \in u_c \xlink \bottom$$
    Clearly ${\Cx}_{|c}$ is a configuration for $\synd^c$.
    \item[(ii)] ${\Cx}_{|\neq c}$ is defined by the remaining labeled links (that is, labeled links with endpoint(s) different from $c$).
    Alternatively, 
    $${\Cx}_{|\neq c}= \Cx - {\Cx}_{|c}$$
    Using the fact that all proper links are between vertices associated to the same check, it holds that ${\Cx}_{|\neq c}$ is a configuration for $\synd^{\neq c}$.
\end{itemize}

Since $\Cx$ is minimal, we conclude that ${\Cx}_{|c}$ is a minimal root-labeled configuration for $\synd^c$, and $ {\Cx}_{|\neq c}$ is a minimal root-unlabeled configuration for $\synd^{\neq c}$.

Similarly, we can write $\Co$  as $\Co={\Co}_{|c}+ {\Co}_{|\neq c}$, where ${\Co}_{|c}$ is a minimal root-unlabeled configuration for $\synd^c$, and $ {\Co}_{|\neq c}$ is a minimal root-unlabeled configuration for $\synd^{\neq c}$. 
It follows that:
$$
|{\Cx}_{|\neq c}| = |{\Co}_{|\neq c}|,\quad |{\Cx}_{|c}| = |\Cxc|,\text{ and} \quad    |{\Co}_{|c}| = |\Coc|
$$
Therefore,
\begin{align*}
    \app(q,i) &= |\Co|-|\Cx| \\
            &= |{\Co}_{|c}|+|{\Co}_{|\neq c}| - ( |{\Cx}_{|c}|+|{\Cx}_{|\neq c}|) \\
            &= |{\Co}_{|c}| - |{\Cx}_{|c}| \\
            &= |\Coc| - |\Cxc| \\
            &= \app^c(q,i)
\end{align*}
So the blindness property is satisfied for any qubit $q$ neighboring $c$, which concludes the proof of Theorem~\ref{thm:limitedscope}.  
\end{proof}

\begin{corollary}
    Such a syndrome is undecodable by the MS
\end{corollary}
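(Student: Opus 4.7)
The plan is to combine the local blindness property of Theorem~\ref{thm:limitedscope} with a symmetry argument around each unsatisfied check. Let $\synd$ be a syndrome whose unsatisfied checks are pairwise at distance $\ge 5$, and fix an unsatisfied check $c$ with its four neighboring qubits $q_1,q_2,q_3,q_4$. By Theorem~\ref{thm:limitedscope}, for every iteration $i \ge 0$ we have $\app(q_j,i)=\app^c(q_j,i)$, so it suffices to analyze the MS run on the fake syndrome $\sc$.

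Next, I would exploit the symmetry of $\sc$ under the $4$-fold rotation of the square lattice around $c$. This rotation is an automorphism of the toric code that fixes $c$ and cyclically permutes the $q_j$, and it sends $\sc$ to itself (since $c$ is the only unsatisfied check). Under the i.i.d.\ noise assumption all qubits share the same \apriori value, so the MS update rule is equivariant under this automorphism. Concretely, the rooted decoding trees $\mathcal{T}_{i,q_j}$ are isomorphic as labeled trees (with identical syndrome values at corresponding vertices), hence Theorem~\ref{theo:Wiberg96} gives a common value $\app^c(q_1,i)=\app^c(q_2,i)=\app^c(q_3,i)=\app^c(q_4,i)$ for every $i$.

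Combining the two steps, $\app(q_1,i)=\app(q_2,i)=\app(q_3,i)=\app(q_4,i)$ at every iteration. Consequently, MS assigns the same hard decision $a\in\{0,1\}$ to the four qubits around $c$, and their contribution to the parity at check $c$ is $4a \equiv 0 \pmod 2$. Since no other qubit is incident to $c$, the syndrome induced by the MS output at $c$ is $0$, contradicting $\synd(c)=1$. Thus MS never produces a correction satisfying $\synd$, which is exactly the meaning of ``undecodable.''

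The only delicate point is the symmetry step: one must note that $\sc$ (not $\synd$ itself) is what is symmetric around $c$, which is why the blindness Theorem is invoked first to reduce to $\sc$. Once this reduction is made, equivariance of MS under lattice automorphisms follows immediately from the tree-based description in Theorem~\ref{theo:Wiberg96}, and the parity contradiction is automatic because the degree of $c$ is even.
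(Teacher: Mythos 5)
Your proposal is correct and follows essentially the same route as the paper: reduce to the fake syndrome $\sc$ via the blindness property, then use a lattice symmetry fixing $c$ to force the four neighboring qubits' error estimates to contribute even parity at $c$ at every iteration. The only cosmetic difference is that the paper uses the horizontal/vertical reflection symmetries (pairing the four qubits into two equal pairs, which works even on a rectangular torus), whereas you use the $4$-fold rotation (making all four values equal, which implicitly assumes the standard square torus); either way the parity at $c$ is $0 \neq \synd(c)$.
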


\begin{proof}
    The blindness property for Theorem~\ref{thm:limitedscope} implies that each of the four qubits around an unsatisfied check has the same same \aposteriori  value (thus the same error estimate), at each iteration, as its horizontal/vertical symmetric. Hence the parity of each unsatisfied check is never satisfied, for any number of iterations, and as such, this syndrome is undecodable by the MS.
\end{proof} 

Finally, we note that the blindness property can actually be proven for qubit-to-check messages entering an unsatisfied check, the computation tree of such a message being given by the either  left of right half of the decoding tree in Figure~\ref{fig:toric_dectree} (with root edge $e_q$ becoming a dangling edge). Consequently, it also applies to check-to-qubit messages going out from an unsatisfied check, since such outgoing messages only depend on the incoming ones.

\subsection{Theorem~\ref{thm:4inarow} : MS Non-Degenerate Decoding Radius}

We now go on to prove Theorem~\ref{thm:4inarow}, which we restate here for the reader's convenience.

\weightfourbis*

We first show that there exist weight 4 non-degenerate errors that are undecodable by the MS. Specifically, we use the error consisting of 4 consecutive horizontal qubits in error, denoted by $\epsilon^4$, and $\synd^4$ its weight 2 syndrome. Note that this error can appear anywhere on the code up to translation, and rotations. 
The proof follows the ideas of Theorem~\ref{thm:limitedscope}. Since Lemmas~\ref{lem:unlabeled-ge4} and~\ref{lem:labeled-ge4} were shown for unsatisfied checks at distance $\ge 4$, they automatically apply in the context of Theorem~\ref{thm:4inarow} for syndrome $\synd^4$. We only need to show a (weaker) analog of Lemma~\ref{lem:c2c}.

We then need to show that all weight $\le 3$ non-degenerate errors are decodable by the MS.
This is done using exhaustive search for small codes, and by reduction to localized errors by analysing the number of iterations needed to decode. 

\subsubsection{Preliminary Lemmas}

We now show the analog of Lemma~\ref{lem:c2c}, only this time the result is a little weaker. But first we show a preliminary Lemma used in the proof.

\begin{lemma}\label{lem:lambda} 
For the syndrome $\synd^4$, and for any decoding tree $\Tiq$ associated to qubit $q$ incident to an unsatisfied check $c$:

Let $u \link v$ be a $\Lambda$-link in $\Tiq$ and $S$ be the smallest subtree of $\Tiq$ containing it, then for any vertex $w \in V(\Tiq) \setminus V(S)$ associated to an unsatisfied check, $$d(u,w)=d(v,w)\ge 6$$
\end{lemma}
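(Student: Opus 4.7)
The plan is to split the statement into the equality $d(u,w)=d(v,w)$ and the lower bound $d(u,w)\ge 6$, both of which reduce to properties of the apex $a$ of the $\Lambda$-link. Since any tree path from $u$ or $v$ to a vertex $w\notin V(S)$ must exit $S$ through its topmost vertex $a$, and $d(u,a)=d(v,a)=2$ holds by the very definition of a $\Lambda$-link (via the middle vertices $m_1$ and $m_2$ respectively), the equality follows immediately from $d(u,w)=2+d(a,w)=d(v,w)$. The whole lemma therefore reduces to proving $d(a,w)\ge 4$ for every unsatisfied-check vertex $w\notin V(S)$.

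For that bound, I would first classify the possible $\Lambda$-links by the pair $(c_u,c_v)$ of lattice positions of the endpoints. Because $\synd^4$ has only the two unsatisfied checks $c_1,c_2$ at lattice distance exactly $4$, and the $\Lambda$-link is a length-$4$ lattice walk $c_u\to c_{m_1}\to c_a\to c_{m_2}\to c_v$ with $\{c_u,c_v\}\subseteq\{c_1,c_2\}$, the lattice position $c_a$ of the apex is forced into a small set of possibilities: if $c_u\neq c_v$ then $c_a$ is the lattice midpoint of $c_1,c_2$ and the tree parent of $a$ must be a perpendicular neighbor of $c_a$, since $c_{m_1}$ and $c_{m_2}$ already occupy the two neighbors of $c_a$ on the segment from $c_1$ to $c_2$; if $c_u=c_v$, say $=c_1$ by symmetry, then $c_a$ is a diagonal (distance-$2$) neighbor of $c_1$ with $c_{m_1},c_{m_2}$ the two neighbors of $c_a$ adjacent to $c_1$. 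In each configuration I would inspect the lattice positions of the shallow ancestors of $a$ along its wwr: a triangle-inequality argument, combined with the walk-without-return condition and the exclusion of $c_{m_1},c_{m_2}$ from the direction leading up to $a$, shows that the three nearest ancestors of $a$ have lattice positions distinct from both $c_1$ and $c_2$. It follows that any unsatisfied-check ancestor of $a$ in the tree sits at depth at most $k_a-4$, hence at tree distance $\ge 4$ from $a$.

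It remains to handle an unsatisfied-check vertex $w$ lying in a sibling subtree branching off some ancestor $p$ of $a$ at depth $k_p$. I would write $d(a,w)=(k_a-k_p)+d(p,w)$ and bound $d(p,w)$ from below by the lattice distance from the lattice position of $p$ to $c_w$, subject to the wwr-imposed restriction that the first edge of the forward walk is neither the edge toward $a$ nor the edge back along $p$'s wwr. The main obstacle is precisely this first-step constraint: for shallow ancestors it may forbid the unique lattice-shortest direction and force a detour, which then has to be compensated against the backward gain $k_a-k_p$. A direct case check over the handful of possible apex lattice positions (midpoint or diagonal, up to lattice symmetries) shows that this trade-off always yields $d(a,w)\ge 4$. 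Combined with the symmetry reduction, this delivers $d(u,w)=d(v,w)=2+d(a,w)\ge 6$, which is the claim.
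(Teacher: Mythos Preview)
Your proposal is correct and follows essentially the same approach as the paper: both reduce to the apex (the paper calls it $g$, the grandfather of $u$ and $v$) and argue that any walk-without-return on the lattice from $c_a$ whose first edge is not one of the two $\Lambda$-link edges needs length $\ge 4$ to reach an unsatisfied check, with the same two-case split ($c_u=c_v$ diagonal vs.\ $c_u\neq c_v$ midpoint). The only difference is packaging: the paper treats the tree path $a\to w$ as a single lattice wwr and reads off the $\ge 4$ bound directly from the two figures, whereas you further decompose into an ancestor case and a sibling-subtree case before doing the same lattice check---this extra split is not needed, since the tree path from $a$ to any $w\notin V(S)$ is already such a wwr.
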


\begin{proof}
Suppose there is a $\Lambda$-link $u_c \link v_{c'}$. We now consider the corresponding walk~without~return between the checks $c$ and $c'$ on the toric code. 
There are two possibilities, either $c = c'$ (Figure~\ref{fig:lambda-a}), or $c\neq c'$ (Figure~\ref{fig:lambda-b}).
In any case, let $g$ be the toric code check associated with the grandfather of $u_c$ and $v_{c'}$ in $\Tiq$. On the toric code, let $p$ and $p'$ the qubits incident to $g$ on the related walks without return to $c$ and $c'$. Then any wwr from $g$ to $c$ or $c'$ and starting with $p'' \notin \{p,p'\}$ is of length $\ge 4$. Hence, any vertex associated to an unsatisfied check in $V(\Tiq)\setminus S$ must be at distance $\ge 6$ from $u_c$ and $v_c'$. 
\end{proof}

\begin{figure}[t]
\centering
\begin{subfigure}{0.48\textwidth}
\centering
    \includegraphics{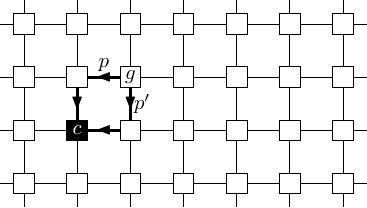}
    \caption{$\Lambda$-link between two vertices associated to the same check, visualized on the toric code.}
    \label{fig:lambda-a}
\end{subfigure}\hfill%
\begin{subfigure}{0.49\textwidth}
\centering
    \includegraphics{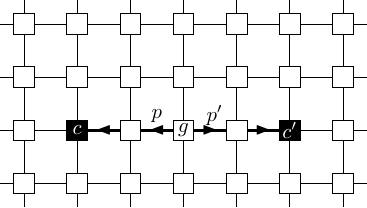}
    \caption{$\Lambda$-link between two vertices associated to different checks, visualized on the toric code.}
    \label{fig:lambda-b}
\end{subfigure}

\caption{The cases of $\Lambda$-link}
\label{fig:4links}
\end{figure}

The following corollary is a straight application of Lemma~\ref{lem:lambda}.

\begin{corollary}\label{corollary:lambda}
   Under the hypothesis of Lemma~\ref{lem:lambda}, let $u\link v$ be an $I$-link with $v$ being its the bottom endpoint. Then $v$ cannot belong to a $\Lambda$-link.
\end{corollary}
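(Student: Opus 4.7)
The plan is to derive a contradiction directly from Lemma~\ref{lem:lambda} by a simple depth count in $\Tiq$. Suppose for contradiction that $v$ is an endpoint of some $\Lambda$-link $v \link w$ in $\Tiq$, and let $S$ be the smallest subtree of $\Tiq$ containing this $\Lambda$-link.

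First I would pin down the vertical extent of $S$. By the definition of a $\Lambda$-link (twice up, then twice down), starting from $v$ the link visits $v$'s parent, then $v$'s grandparent (the apex), then a sibling branch of $v$'s parent, then one of its children $w$. Consequently, if $d(v)$ denotes the depth of $v$ in $\Tiq$, every vertex of $S$ has depth at least $d(v) - 2$; in other words, $S$ reaches no higher than the grandparent of $v$.

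Next I would locate $u$ relative to $S$. Since $u \link v$ is an $I$-link with $v$ at the bottom, $u$ is an ancestor of $v$ lying four levels above it, so $u$ has depth $d(v) - 4$. This is strictly higher than the topmost level of $S$, hence $u \in V(\Tiq)\setminus V(S)$. Moreover, as $u \link v$ is a proper link, $u$ is associated with an unsatisfied check. I can therefore apply Lemma~\ref{lem:lambda} to the $\Lambda$-link $v \link w$ with the external unsatisfied vertex $u$, which yields $d(u,v) \ge 6$. But $u \link v$ is a path of length $4$ in $\Tiq$, so $d(u,v) = 4$, a contradiction. Hence $v$ cannot be an endpoint of any $\Lambda$-link, which is the claim.

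The only technical delicate point is ensuring that $u$ genuinely sits outside $V(S)$; once the two-level upward span of $S$ versus the four-level upward span of the $I$-link is made explicit, Lemma~\ref{lem:lambda} does all the work. No auxiliary lemma or walk-inversion argument appears to be needed, so I would expect the whole proof to be only a few lines.
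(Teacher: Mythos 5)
Your proof is correct and is essentially the argument the paper has in mind: the paper gives no explicit proof, stating only that the corollary is ``a straight application of Lemma~\ref{lem:lambda}'', and your depth count (the $\Lambda$-link's subtree reaches only two levels above $v$, while the upper endpoint $u$ of the $I$-link sits four levels above, hence lies outside $S$ and forces $d(u,v)\ge 6$, contradicting $d(u,v)=4$) is precisely that application spelled out.
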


The following Lemma is a (weaker) analog of Lemma~\ref{lem:c2c} for the syndrome $\synd^4$.

\begin{lemma}\label{lem:c2c-e4}
For the syndrome $\synd^4$, and for any decoding tree $\Tiq$ associated to qubit $q$ neighboring an unsatisfied check $c$, there exists a minimal configuration such that the endpoints of any $I$ or $\Gamma$ labeled link are vertices associated to the same check.
\end{lemma}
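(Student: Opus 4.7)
The plan is to start from a minimal configuration $C^{\le 4}$ produced by Lemma~\ref{lem:unlabeled-ge4} (or Lemma~\ref{lem:labeled-ge4} for the root-labeled case), in which every labeled link already has length $\le 4$, and then to modify it along the lines of the algorithms in those lemmas in order to destroy any labeled $I$- or $\Gamma$-link whose two endpoints are associated to distinct unsatisfied checks. Since the only unsatisfied checks of $\synd^4$ are the two endpoints $c_1, c_2$ of the error $\epsilon^4$, this amounts to ruling out $I$- and $\Gamma$-links between $c_1$ and $c_2$.

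The key geometric observation is that $c_1$ and $c_2$ sit at distance exactly $4$ on the toric code, joined by the horizontal line of four qubits of $\epsilon^4$. A walk~without~return of length~$4$ between $c_1$ and $c_2$ is therefore a geodesic of the square lattice and is forced to coincide with that straight horizontal line. Hence whenever a labeled length-$4$ link in $\Tiq$ has endpoints associated to both $c_1$ and $c_2$, its four labeled edges project onto the four qubits of $\epsilon^4$, irrespective of whether the link is of shape $I$, $\Gamma$ or $\Lambda$; only the lift of this toric walk into the tree changes. This rigidity is what makes the rerouting argument tractable, and the role of the statement is essentially to say that such a walk can be lifted as a $\Lambda$-link, but not as an $I$- or $\Gamma$-link.

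For each offending labeled $I$- or $\Gamma$-link from $u_{c_1}$ to $v_{c_2}$ in $C^{\le 4}$, I would apply Lemma~\ref{lem:walk-inversion} to a walk from dangling edge to dangling edge built by extending the link in both directions, as in the proofs of Lemmas~\ref{lem:unlabeled-ge4} and~\ref{lem:labeled-ge4}: upward from $u_{c_1}$ through $\Tiq$ (and, if necessary, past the root edge $e_q$ and back down to a dangling edge on the opposite side), and downward from $v_{c_2}$ to another dangling edge. The extensions are chosen greedily so that (i) along each extension the number of labeled edges is at least the number of unlabeled ones, which guarantees that inversion does not increase $|C|$, and (ii) no labeled link of length $>4$ appears after inversion. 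After the flip, the cross-check $I$ or $\Gamma$ link is replaced by edges that reorganize themselves into same-check links, into a $\Lambda$-link between $c_1$ and $c_2$ (which is allowed), or into unlabeled edges. Finitely many iterations then yield a minimal configuration with the desired property.

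The main obstacle is the case analysis in the extension step, because the shapes $I$ and $\Gamma$ lift the horizontal walk $c_1 \to c_2$ into $\Tiq$ differently, and the directions available for extending upward toward the root and downward below $v_{c_2}$ depend both on where the link sits in $\Tiq$ and on the labelings of the edges incident to $u_{c_1}$, $v_{c_2}$ and to the three intermediate check vertices of the link. Corollary~\ref{corollary:lambda} will be used to keep any surviving $\Lambda$-link between $c_1$ and $c_2$ well separated from the rest of the configuration, so that the inversion cannot accidentally create a new labeled link of length~$>4$ elsewhere in $\Tiq$.
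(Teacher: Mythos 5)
There is a genuine gap in the weight accounting, which is precisely the point where your plan diverges from what is actually needed. After reducing to a configuration in which every labeled link has length $\le 4$, an offending cross-check $I$- or $\Gamma$-link $u^0_{c}\xlink v^0_{c'}$ has length \emph{exactly} $4$ (the two unsatisfied checks of $\synd^4$ sit at lattice distance $4$). Unlike in Lemmas~\ref{lem:unlabeled-ge4} and~\ref{lem:labeled-ge4}, where the removed link has length $\ge 5$ and its unlabeling provides slack to absorb the cost of the chain's dangling ends, here there is zero slack: by minimality of the starting configuration the inversion must be \emph{exactly} weight-neutral. Your greedy condition ``along each extension the number of labeled edges is at least the number of unlabeled ones'' cannot hold strictly (that would contradict minimality) and you give no mechanism forcing equality; extending \emph{upward} from $u_{c}$ toward (and past) the root gives no control over how the two ends of the chain terminate, which is exactly the $L_1$-type bookkeeping that already required a special parity argument in the $\ge 5$ case. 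The paper's proof avoids this entirely with a device your proposal is missing: because of the length-$4$ plaquette loop through $c$, the upper endpoint $u^0_c$ of an $I$- or $\Gamma$-link admits an \emph{unlabeled} $I$-link $u^0_c\link u^1_c$ going \emph{down} to another copy of the \emph{same} check at the same depth as $v^0$ (it must be unlabeled, else concatenation gives a labeled link of length $\ge 5$). Seeding the alternating chain with the pair $\{u^1_c\link u^0_c,\ u^0_c\xlink v^0_{c'}\}$ and then growing both ends downward in lockstep --- appending, at each step, a labeled link on one side and an unlabeled link of the same type on the other (Lemma~\ref{lem:lambda} ruling out labeled $\Lambda$-links that would break the depth matching) --- keeps both endpoints at equal depth throughout, so the chain terminates simultaneously at dangling edges and the inversion swaps labeled and unlabeled edges in equal number.

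Two smaller remarks. Your geometric observation that any length-$4$ walk between $c_1$ and $c_2$ projects onto the straight line of $\epsilon^4$ is correct but is not what the argument turns on; the relevant rigidity is Lemma~\ref{lem:lambda}, which constrains where $\Lambda$-links can sit, and your appeal to Corollary~\ref{corollary:lambda} to ``keep surviving $\Lambda$-links well separated'' does not substitute for the lockstep construction. Finally, you yourself flag the extension-step case analysis as the unresolved obstacle; since that case analysis is the entire content of the lemma beyond the setup, the proposal as written does not constitute a proof.
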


\begin{proof}
Take a configuration having all labeled links of length $\le 4$.
Suppose there is a labeled link between $u^0_c$ and $v^0_{c'}$, and take one deepest if there are several. 
If applying, always consider $u^0$ always be on a higher level than $v^0$.\smallskip
\begin{itemize}
    \item[(i)] If this link is an $I$-link or a $\Gamma$-link, then there must exist an unlabeled $I$-link  $u^0_c \link u^1_c$ where $u^1_c$ is at the same depth as $v^0$. 
Then set $\mathcal{A} = \{u^1_c \link u^0_c, u^0_c \link v^0_{c'} \}$
The link $u^0_c \link u^1_c$ must be unlabeled as otherwise the link $u^1_c \xlink v^0_c$ of length $\ge 5$ would contradict the assumptions of Lemma~\ref{lem:unlabeled-ge4}.

    \item[(ii)] If the link contains the root edge $q$ (in case of a labeled configuration), then there must also exists an unlabeled link  $u^0_c \link u^1_c$ where $u^1_c$ is at the same depth as $v^0$ using the same argument. 
Then set $\mathcal{A} = \{u^1_c \link u^0_c, u^0_c \link v^0_{c'}\}$.
\end{itemize}

Now complete this into an alternating chain from dangling edge to dangling edge by doing the following iterative algorithm that satisfies the property:
\begin{itemize}
    \item[$(P)$] \textit{At each iteration of the algorithm, the chain endpoints are at the same depth on both sides}
\end{itemize}

At a given iteration, suppose without loss of generality that $u^i$ came from a labeled link and $v^i$ from an unlabeled link.
Then pick one labeled link from $v^i$ going down. There must be one and it must be unique since we assumed this is a valid labeling, and by Lemma~\ref{lem:lambda}, this cannot be a labeled $\Lambda$-link.
Also pick an unlabeled link of the same type in the neighborhood of $u^0$. There must exist one, otherwise this would give a labeled link of length $ \ge 5$.\smallskip

This algorithm gives a path from dangling edge to dangling edge.
Now inverting this path will remove the link between $u^0_c$ and $v^0_{c'}$ while maintaining exactly the same weight, hence keeping the minimality of the configuration.
\end{proof}

\subsubsection{Sketch of Proof of Theorem~\ref{thm:4inarow} }

To prove the blindness of the qubits around the unsatisfied checks of $\synd^4$, we cannot use the fact that all proper labeled links are between copies of the same checks. Fortunately, the only time there can be a proper labeled link $u_c \xlink v_{c'}, \ c\neq c'$, it must be a $\Lambda$-link. Using a notion of pruning of the decoding tree, we are able to split the decoding tree into a pruned tree where the only remaining unsatisfied vertices are associated to $c$, and removed subtrees where there might still be vertices associated to $c$, but with the guarantee (that will be the bulk of the proof) that there are no labeled edges between the removed subtrees and the pruned tree. The details of the proof can be found in  Appendix~\ref{app:weightfour_a}.

To show that all non-degenerate errors of weight $\le 3$ are decodable, we do it by a formal argument for codes of larger distance ($d\ge 18$) by introducing the notion of $\delta$-local syndrome, and we verify it numerically for smaller codes ($9\le d < 18$).
We defer the proof to Appendix~\ref{app:weightfour_b}.

\subsection{Conjectures about Normalized MS and BP}

For Normalized MS (NMS), we conjecture that Theorems~\ref{thm:limitedscope} holds for any normalization factor in $]0,1[$.
To support our claim, we did some numerical testing for codes with minimum distance $d \in \{11,24\}$ (giving a ``small'' and a ``larger'' code, odd and even size), normalization factor  $\lambda$ between 0.0625 and 1, with a step of 0.0625, and for a maximum number of decoding iterations equal to 100. Note that since the normalization factors are sums of powers of two, we do not have to worry about numerical instability and check equality up to floating point precision. 
Verifying the statement for all error syndromes with all the unsatisfied checks  at distance  $\ge 5$ even for reasonably small code sizes is not  feasible, so we only tested syndromes of weight 2, satisfying the condition that the two unsatisfied checks are at distance $\ge 5$. 
Also to support even more our claim, we tested the syndrome defined by $\synd(c_{i,j})=1 \iff j = 5i \mod 13$ on toric codes of distance $d=13$ (Figure~\ref{fig:nms_conjecture}) and $d=26$. This is a syndrome ``packed'' with as many unsatisfied checks as possible, all at distance 5 or 6 from each other. In all our numerical experiments, the blindness property is conserved.

\begin{figure}[!thb]
\centering
\includegraphics{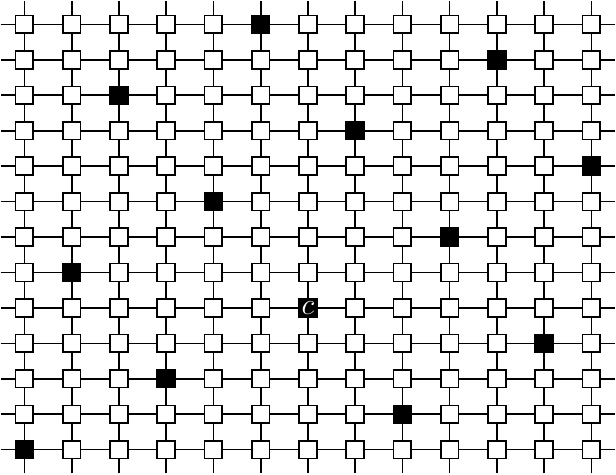}
\caption{Syndrome defined by $\synd(c_{i,j})=1 \iff j = 5i \mod 13$ on a toric code of distance $d=13$. Note that this is not a valid error syndrome (as it is of odd weight), but  it is only used to verify the blindness property. For $d=26$ we get a valid error syndrome.
}
\label{fig:nms_conjecture}
\end{figure}

For BP, we know the blindness statement of Theorem~$\ref{thm:limitedscope}$ to be false, since  BP takes into account all the configurations in the decoding tree (for a given syndrome), and not only a minimal one (see~\cite{Wiberg96} for more details). However, the minimal configurations are still somehow dominant in the \aposteriori reliabilities computed by the BP, so we wonder if a similar statement could be made, perhaps using a relaxed notion of blindness. So far, our preliminary numerical results suggest that 
syndromes satisfying the hypothesis of Theorem~$\ref{thm:limitedscope}$ are not decodable by BP, and moreover $\displaystyle\lim_{i\rightarrow\infty}\left(\app(q,i) - \app^c(q,i)\right) = 0$, which is likely explained by the fact that minimal configurations become more dominant as the decoding tree size increases.

\section{Stabiliser-Blowup Pre-Processing }

In Section \ref{sec:limited-scope}, we made a point that there is more than just degeneracy preventing the convergence of the MS, and that there is no hope to correct adversarial errors of weight 4 or more on the toric code using MS. 
While all the non-degenerate errors of weight $\leq 3$ are decodable by the MS (see Appendix \ref{app:weightfour_b}), numerical simulations show that most degenerate errors are not (see Appendix \ref{app:add-res-degen} for a detailed discussion). 
Here we present a new pre-processing that runs in linear time and allows to correct all errors of weight up to 3.
The idea of the pre-processing is to infer where the decoder might fail by looking at local degenerate syndromes. We then apply a code modification, corresponding to a change of variables, that locally removes the degeneracy to allow the decoder to converge.

\subsection{The Stabiliser-Blowup Graph Modification}

\begin{figure}[t]
\centering
\begin{subfigure}{.46\linewidth}
\centering
    \includegraphics[scale=0.95]{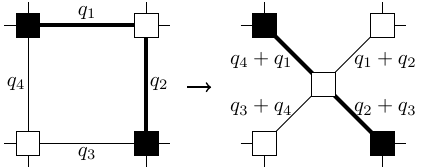}
    
    \vspace*{-5mm}
    \caption{}
\end{subfigure}\hfill%
\begin{subfigure}{.46\linewidth}
\centering
    \includegraphics[scale=0.95]{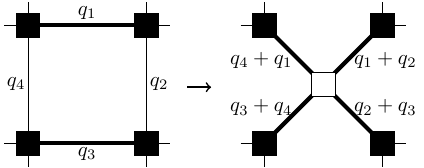}

    \vspace*{-5mm}
    \caption{}
\end{subfigure}
\caption{Stabiliser-blowup for degenerate errors of weight 2: 
the four edges around the original plaquette are replaced by four new edges, connecting the checks around the plaquette to a new check in the middle of the plaquette. Labels $q_i$'s on the edges of the original plaquette are interpreted here as binary random variables, equal to $1$ if the corresponding qubit is in error, and $0$ otherwise. Each new edge is labeled by the sum of the $q_i$'s incident to the same check. 
(a) The diagonal weight-2 syndrome, shown on the original plaquette together with one of its possible explanations ($q_1 = q_2=1$) on the left. After the blowup, the syndrome is uniquely explained as $q_4+q_1 = q_2+q3 = 1$. 
(b) The square weight-4 syndrome, shown on the original plaquette together with one of its possible explanations ($q_1 = q_3=1$). After the blowup, the syndrome is uniquely explained as $q_4+q_1 = q_1+q_2 = q_2+q3 = q_3+q_4 = 1$. } 
\label{fig:sb-examples}
\end{figure}

The stabiliser-blowup is a decoding graph modification that allows to locally remove the degeneracy of the code around a plaquette. 
In the following, we use it as a preprocessing to remove the degeneracy of degenerate weight 2 errors (depicted in Figure~\ref{fig:sb-examples}) and degenerate weight 3 errors (depicted in Figure~\ref{fig:sb-example-c}).

\begin{figure}[t]
\center
\includegraphics[scale=0.95]{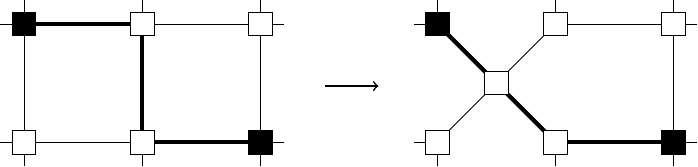}
\caption{Stabiliser-blowup for degenerate errors of weight 3: an example of a  weight-3 degenerate error on the original toric code (generating a weight-2 syndrome) becoming a weight-3 non-degenerate error after the blowup. }
\label{fig:sb-example-c}
\end{figure}

We  remove the four edges  associated to the qubits around a given plaquette and replace them by four new edges connecting the four corner checks of the plaquette to an additional (satisfied) check in the middle. Each new edge corresponds to the parity of the two removed edges around a given check. Thus, the sum of the four new edges must be even, hence the addition of the satisfied middle check connecting them.
We later show that any solution satisfying the syndrome on the modified graph will easily yield a valid solution for the original graph, and this solution is unique up to stabiliser equivalence.
As a remark, this code change should not be applied on two neighboring plaquettes, as this would create new degeneracy issues between them (see Figure~\ref{fig:sb-twice}).
Therefore, we use the stabiliser-blowup as a pre-processing, and only apply the blowup where the error syndrome suggests that degeneracy issues might occur. 
One could ask why those local degeneracy issues are not directly solved ``by hand'' as a pre-processing,
by assigning a fixed value to one of the qubits thus removing the degeneracy. But this could lead to a degradation of the decoding by systematically increasing the weight of the outputted error, thus increasing the likelihood of logical errors, so we instead rely on the MS to decide on the error pattern. 

\begin{figure}[t]
\center
\includegraphics[scale=0.95]{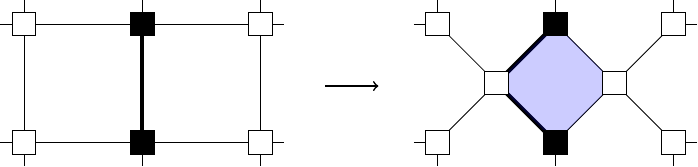}
\caption{Stabiliser-blowup on adjacent plaquettes leads to undecodable  weight-1 errors: the weight-1 error on the original toric code becomes a weight-2  degenerate error after the blowup.}
\label{fig:sb-twice}
\end{figure}
\subsection{The Algorithm}
In order to correct all errors of weight $\leq 3$, it is enough to consider the syndromes of \emph{local} degenerate errors of weight 2 and 3. 
Whenever one of those patterns appears around a plaquette, we do the blowup, except if an adjacent plaquette (sharing an edge) is already blown-up.

To remove all errors of weight $\leq 3$, we use the following \textbf{greedy heuristic}:
We go through all plaquettes  in the lexicographical order of their coordinates. We do this three times, corresponding to steps (i) (ii) (iii) below, each time checking for a different set of patterns. 
If  one of the patterns looked for at this step is found in the neighborhood of a plaquette $p$, we apply the blowup on $p$ if it has no adjacent plaquette that has been blown-up.
The local neighborhood of each plaquette is labeled in Figure~\ref{fig:diagonal-fail}.

\begin{figure}[ht]
    \centering
    \includegraphics{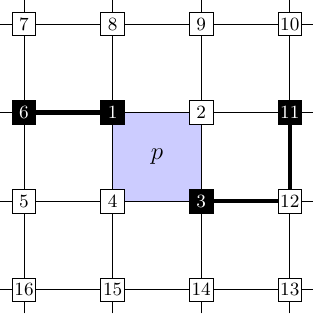}
    \caption{The neighborhood of plaquette  $p$, corresponding to the 16 checks in its surrounding. 
    In bold the corner case of the heuristic in (i) is depicted.}
    \label{fig:diagonal-fail}
\end{figure}

Each check is given a number, and in the following we associate to each check a true/false value depending on whether it is unsatisfied/satisfied.
\begin{itemize}
    \item[(i)] Weight 2 degenerate errors with condition on diagonal
    \begin{itemize}
        \item $\synd(c_1) \wedge \synd(c_2) \wedge \synd(c_3) \wedge \synd(c_4)$
        \item $\overline{\synd(c_1)} \wedge \synd(c_2) \wedge \overline{\synd(c_3)} \wedge \synd(c_4) \wedge \Big( \synd(c_9)\oplus \synd(c_{11}) \oplus \synd(c_5) \oplus \synd(c_{15}) = 0 \Big)$
        \item $\synd(c_1) \wedge \overline{\synd(c_2)} \wedge \synd(c_3) \wedge \overline{\synd(c_4)} \wedge \Big(\synd(c_6)\oplus \synd(c_8) \oplus \synd(c_{12}) \oplus \synd(c_{14}) = 0 \Big)$
    \end{itemize}
    
    \item[(ii)] Weight 2 diagonal degenerate errors without condition
\begin{itemize}
    \item $\overline{\synd(c_1)} \wedge \synd(c_2) \wedge \overline{\synd(c_3)} \wedge \synd(c_4)$
    \item $\synd(c_1) \wedge \overline{\synd(c_2)} \wedge \synd(c_3) \wedge \overline{\synd(c_4)}$
\end{itemize}

    \item[(iii)] Weight 3 degenerate errors - patterns not yet catched by (i) and (ii)
\begin{itemize}
    \item $\synd(c_1) \wedge \overline{ \synd(c_2)} \wedge \overline{ \synd(c_{11})} \wedge \overline{\synd(c_4)} \wedge \overline{\synd(c_3)} \wedge \synd(c_{12})$ 
     \item $\overline{\synd(c_1)} \wedge \overline{\synd(c_2)} \wedge \synd(c_{11}) \wedge \synd(c_4) \wedge \overline{\synd(c_3)} \wedge \overline{\synd(c_{12})}$ 
      \item $\synd(c_1) \wedge \overline{\synd(c_2)} \wedge \overline{\synd(c_4)} \wedge \overline{\synd(c_3)} \wedge \overline{\synd(c_{15})} \wedge \synd(c_{14})$ 
      \item $\overline{\synd(c_1)} \wedge \synd(c_2) \wedge \overline{\synd(c_4)} \wedge \overline{\synd(c_3)} \wedge \synd(c_{15}) \wedge \overline{\synd(c_{14})}$ 
\end{itemize}
\end{itemize}

The diagonal condition of (i) comes from a corner case that would make the heuristic fail if not property taken into account.  
This is illustrated in Figure~\ref{fig:diagonal-fail}, where, if the plaquette $p$ is blown-up, then the plaquette on the right could not be blown-up in its turn, and the MS would fail to converge.
Clearly (i) and (ii) are enough to take care of all weight 2 degenerate errors. There are some weight 3 errors which are also taken into account by (i) and (ii), the remaining ones being taken care of by (iii).

\SBMS*

The proof of the Theorem is given in Appendix~\ref{app:sbms-proof}, by using the tools introduced to prove the second part of Theorem~\ref{thm:4inarow}.

\subsubsection{Numerical Results}

Here we plot the error correction performance of the MS versus the SB+MS. One can see that the slopes match the expected values. Indeed, the MS is able to converge on all errors of weight 1 but fails on some errors of weight 2. The dominant term is thus $p^2$ and the slope matches it for small $p$. For the SB+MS since the smallest undecodable error is of weight 4, and the slope matches the $p^4$ curve.

\begin{figure}
    \centering
    \includegraphics[scale=0.8]{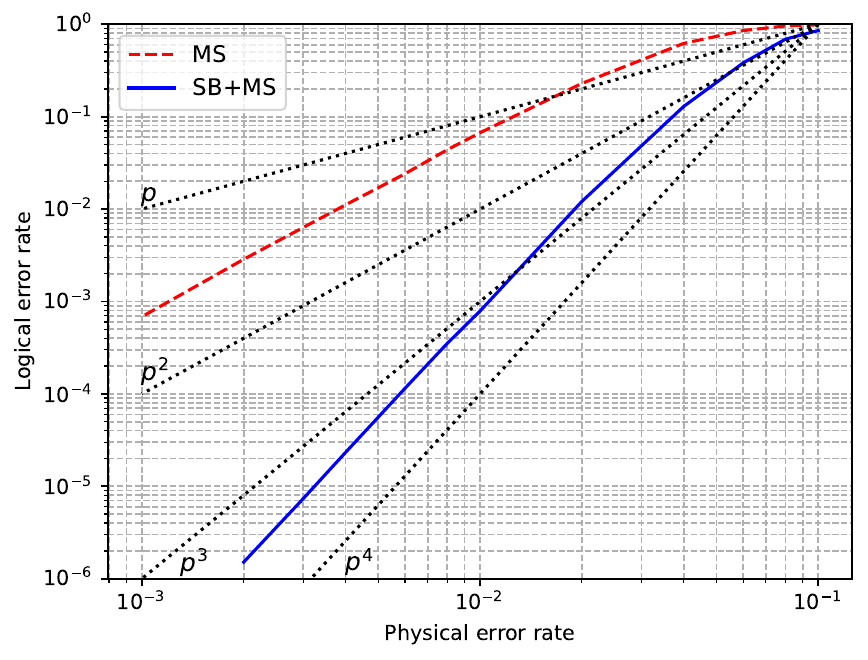}
    \caption{Numerical results for  MS and  SB+MS decoding on the toric code of distance~11. A maximum number of 15 decoding iterations are used for the MS decoding. The slope of the MS is $p^2$ since the smallest undecodable errors are of weight 2, and the slope of the SB+MS is $p^4$ since the smallest undecodable errors are of weight 4.}
    \label{fig:blowup-numres}
\end{figure}

\subsubsection{SB+MS+OSD}

One of the most used post-processing is Ordered Statistic Decoding (OSD), proposed in \cite{panteleev2021degenerate}.
Numerically, MS+OSD  is a good approximation of the classical ML for the toric code (see for example \cite{roffe2020decoding}). However since the OSD post-processing is very costly, namely $\mathcal{O}(n^3)$,\footnote{We acknowledge that the theoretical complexity of Gaussian elimination is $\mathcal{O}(n^{2.38})$ (see for example \cite{golub2013matrix}), but such algorithms are not practical as they suffer from huge constant costs, and for real life use cases, the Gaussian elimination algorithm is used.} it is interesting to run the best possible decoder beforehand to avoid unnecessary use of OSD. Looking at the curves from Figure~\ref{fig:blowup-numres}, this yields a quadratic improvement in the number of calls of OSD, only needing to call OSD with probability $\approx p^4$ compared to $\approx p^2$.  Since our pre-processing is linear, this implies a significant improvement on the average decoding complexity for small $p$, going from $\mathcal{O}(p^2 n^3 + n\log(n))$ to $\mathcal{O}(p^4 n^3 + n\log(n) )$.

\section{Conclusion}

Using the powerful framework of decoding tree, we were able to show that there is an inherent limitation to the error-correction capability of the MS on the toric code.
This was done by introducing the  notion of blindness, which not only emphasize the fact that the decoder is unable to converge, but also prove that the spreading of the messages in the code is inherently limited.
We also showed that by using a low cost syndrome-aware pre-processing we were able to make the decoding radius and the non-degenerate decoding radius of the MS match.
In the vein of what had been done recently on the limitations of the renormalization decoder on the toric code in~\cite{rozendaal2024analysis}, to the best of our knowledge this is the first formal result explaining why the MS decoder does not perform well on the toric code, and we also conjecture that similar results could be obtained for other iterative decodings (NMS and BP) that have similar decoding processes, and can also be explained using the decoding tree, although in a different manner that do not allow the proofs to transfer directly. 

The blindness property we proved here comes from a strong imbalance between the girth and the diameter of the graph. This can be seen as a \emph{graph degeneracy}, complementing the standard notion of \emph{code degeneracy}  (\idest, classical minimum distance smaller that the quantum minimum distance), the latter being independent of the Tanner graph representation of the code. It is worth noticing that for well-constructed classical LDPC codes, \exgrata, constructed by the progressive edge growth algorithm~\cite{hu2005regular}, such a graph degeneracy does not happen: the length of the shortest cycle passing through a given check is greater than or equal to the maximum distance between that check and all the others. This ensures fast spreading of messages in the Tanner graph.  
We suspect that the graph degeneracy could have blindness effects for more general quantum LDPC codes, although the imbalance between girth and diameter would be reduced for finite-length Tanner graphs with increased connectivity degree.

Additionally, it might be interesting to explore the stabiliser-blowup pre-processing on its own, to see if it can be applied to a wider class of codes. There should be a straightforward generalisation for all planar codes, but we do believe that the pre-processing could be extended to generic quantum LDPC codes.

\section*{Acknowledgment}

The authors would like to thank Leonid Pryadko and Xingrui Liu for usefull discussions during the elaboration of this work.
This work was supported by the QuantERA grant EQUIP (French ANR-22-QUA2-0005-01), and by the Plan France 2030 (NISQ2LSQ project, ANR-22-PETQ-0006).

\newpage
\appendix

\section{Additional Results on Small Weight Non-Degenerate and Degenerate Errors}\label{app:add-res-degen}

\subsection{Weight 2 Degenerate Errors are Undecodable by MS}

There are only two degenerate errors of weight 2 up to translation and rotations (depicted in Figure~\ref{fig:weight2-sym}). To show that they are both undecodable by the MS, we use the fact that both syndromes are invariant under some symmetries.
The two cases are depicted in Figure~\ref{fig:weight2-sym} (a) and (b).
For the weight 2 syndrome of (a), it is clear that by the symmetry about the $\ell$ line,  $\app(q_k,i) = \app(q'_k,i)$ for $k\in \{0,1\}$,  $\forall i \geq 0$. 
Hence the estimate error around $c$ will never satisfy the syndrome $\synd$. 
A similar argument applies for (b).

\begin{figure}[ht]
    \centering
    \includegraphics{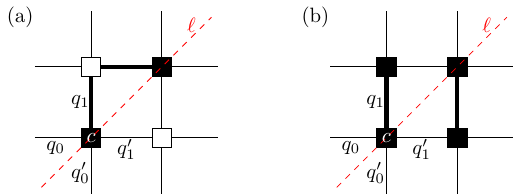}
    \caption{Undecodable degenerate errors of weight 2}
    \label{fig:weight2-sym}
\end{figure}

\subsection{A Weight 3 Degenerate Error Decodable  by MS}

Although we said earlier that degenerate syndromes are key to understand the failures of the MS, and we made it clear using numerical simulations in the above subsection, it is not true that degenerate implies undecodable by MS. In Figure~\ref{fig:decodable-degenerate} we provide the smallest example of a degenerate syndrome that is decodable (in 1 iteration) by the MS.

\begin{figure}[ht]
\center
\includegraphics{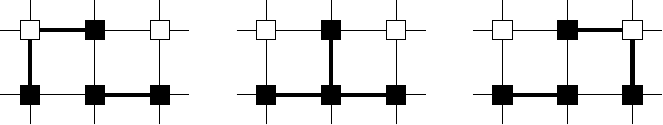}
\caption{Equivalent weight 3 decodable degenerate errors (generating the same degenerate syndrome of weight 4):  Although the three errors are all of weight~3, the MS will converge to the middle error in just one iteration (this can be easily checked, by computing ``by hand'' the \aposteriori  value of each qubit after one iteration). 
}
\label{fig:decodable-degenerate}
\end{figure}

\section{Proof of Lemmas~\ref{lem:unlabeled-ge4} and~\ref{lem:labeled-ge4}}\label{app:proofs}

We first start by showing a few technical lemmas before proving  Lemmas~\ref{lem:unlabeled-ge4} and~\ref{lem:labeled-ge4}.

\subsection{Technical Lemmas}

\begin{definition}
   A labeled link is said \textbf{maximal}  if it cannot be extended into a longer labeled link.
   It is said \textbf{robustly maximal}  if there is no other labeled link sharing at least one edge with it that is of greater length.
\end{definition}

\begin{lemma}\label{lem:maximal-labeled}
  Given a configuration $C$,   and a maximal proper labeled link $u\xlink v$. 
  Then at least 2 of the 3 edges going down from $v$ are unlabeled.
\end{lemma}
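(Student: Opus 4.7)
The plan is to combine a parity count at $v$ with the maximality hypothesis via an extension argument. Since $v$ is the endpoint of a proper labeled link, it is associated with an unsatisfied check, so the four labels $l_a, l_1, l_2, l_3$ on the edges incident to $v$ (the up edge $a$ and the three down edges $b_1, b_2, b_3$) satisfy $l_a + l_1 + l_2 + l_3 \equiv 1 \pmod 2$. I would split into two cases according to whether the link reaches $v$ along the up edge (case~A, forcing $l_a = 1$) or along a down edge (case~B, say $l_1 = 1$ by symmetry), and in each case enumerate the labelings of the remaining three edges compatible with parity.

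The heart of the proof is a continuation claim: \emph{if any edge $e'$ incident to $v$ and distinct from the link's incoming edge is labeled, then $u \xlink v$ can be extended to a strictly longer labeled link}. To establish it, append $e'$ to the link and then iteratively extend by choosing, at each step, an unused labeled edge incident to the current vertex. At any intermediate \emph{satisfied} vertex $w$ reached in this way, the parity constraint forces an even number of labeled edges at $w$; since the edge used to arrive at $w$ accounts for one of them, at least one other labeled edge at $w$ is available to continue. Because $\Tiq$ is a tree and edges are never reused, the procedure terminates, and it can only halt at a vertex associated with an unsatisfied check or at a dangling edge, both of which are valid endpoints of a labeled link. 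This yields a labeled link strictly containing $u \xlink v$, contradicting maximality.

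Given the continuation claim, the enumeration is immediate. In case~A, parity allows $(l_1, l_2, l_3)$ to have Hamming weight $0$ or $2$; the weight-$2$ patterns each contain a labeled down edge distinct from the incoming up edge and are ruled out by maximality, so all three down edges must be unlabeled. In case~B, parity allows $(l_a, l_2, l_3)$ to have Hamming weight $0$ or $2$; the weight-$2$ patterns each contain a labeled edge among $\{a, b_2, b_3\}$ distinct from the incoming $b_1$ and are again ruled out by maximality, so $b_2$ and $b_3$ must both be unlabeled. In both cases at least two of the three down edges at $v$ are unlabeled. The main obstacle is the continuation step: one must simultaneously check that progress is always possible (via the parity argument at satisfied checks) and that the extension never revisits edges of the original link, which rests on the tree structure of $\Tiq$.
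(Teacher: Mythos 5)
Your proof is correct and follows essentially the same approach as the paper's: a case split on whether the link reaches $v$ from above or from below, with maximality ruling out any further labeled edge at $v$ via an extension argument. Your continuation claim (growing a labeled edge at $v$ into a strictly longer labeled link terminating at an unsatisfied check or dangling edge, using parity at satisfied vertices and the tree structure) carefully justifies a step the paper's proof leaves implicit, and the added parity enumeration at $v$ is harmless though not strictly needed.
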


\begin{proof}
     Since $v$ is the endpoint of a maximal labeled link, it is either true that:
    \begin{itemize}
        \item[(i)] $u$ belongs to the subtree rooted in $v$, in which case the only labeled edge going down from $v$ is the one belonging to the labeled link $u\xlink v$, as otherwise $u \xlink v$ could be extended and would not be maximal.
        \item[(ii)] On the other hand, if $u$ does not belong to the subtree rooted in $v$, all three edges below $v$ are unlabeled as otherwise the labeled link $u\xlink v$ could be extended and would not be  maximal.
    \end{itemize}
\end{proof}

\begin{lemma}\label{lem:exists-I}
Given a configuration $C$ for a syndrome where all unsatisfied checks are at distance $\ge 4$, and a maximal proper labeled link $u\xlink v_c$, where $v_c$ is at distance $\ge 5$ from the bottom. 
Then there exists an unlabeled $I$-link $v_c \link v'_c$ such that $v_c$ is its upper endpoint.
\end{lemma}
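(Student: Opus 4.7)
The plan is to build the required unlabeled $I$-link explicitly, by extending the walk-without-return from the root through $v_c$ around a plaquette of the toric code incident to $c$, in such a way that its first edge below $v_c$ is unlabeled.

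First, I would invoke Lemma~\ref{lem:maximal-labeled}: since $u\xlink v_c$ is a maximal labeled link, at least two of the three edges emanating downward from $v_c$ are unlabeled. Pick any such unlabeled down-edge $e_1$, and let $c_1$ denote the check in the lattice at the other endpoint of the corresponding qubit at $c$. Because $e_1$ is not the parent edge of $v_c$ in the tree, the check $c_p$ reached from $v_c$ by going upward one step satisfies $c_p \neq c_1$.

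Next, fix either of the two plaquettes of the torus incident to the lattice edge $(c,c_1)$ and label its corners cyclically as $c,c_1,c_2,c_3$. Since opposite corners of a plaquette are distinct, the plaquette walk $c\to c_1\to c_2\to c_3\to c$ is a wwr of length~$4$ in the lattice. Appending it to the wwr from the root to $v_c$ preserves the wwr property: the only new triple straddling the junction is $(c_p,c,c_1)$, and $c_p\neq c_1$ by the choice of $e_1$. In the decoding tree this extension lifts to four consecutive down-steps from $v_c$, terminating at a new vertex $v'_c$ again associated with $c$. The hypothesis that $v_c$ is at distance $\ge 5$ from the bottom guarantees that $v'_c$ is a genuine vertex of $\Tiq$ rather than a dangling edge endpoint.

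The tree path thus constructed has length~$4$, goes four times down with no intermediate up-step, and has $v_c$ as upper and $v'_c$ as lower endpoint, both associated with the unsatisfied check $c$; it is therefore a proper $I$-link with $v_c$ as its upper endpoint. Its first edge is $e_1$, which is unlabeled by construction, so the link is unlabeled in the sense of the paper. The only nontrivial step is the gluing condition $c_p\neq c_1$, which is exactly why $e_1$ had to be picked among the non-parent directions; beyond that, the construction is forced once one observes that an $I$-link with both endpoints at $c$ corresponds in the lattice precisely to a wwr of length~$4$ from $c$ back to $c$, i.e.\ going around a plaquette at $c$.
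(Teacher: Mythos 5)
Your proof is correct and follows essentially the same route as the paper's: apply Lemma~\ref{lem:maximal-labeled} to obtain an unlabeled down-edge from $v_c$, then extend the walk without return around a plaquette of the lattice incident to that qubit to return to $c$, yielding an $I$-link whose first edge is unlabeled. The paper's version is terser; your added checks (the wwr gluing condition at the junction and the use of the distance-$\ge 5$ hypothesis to ensure $v'_c$ exists in the tree) are exactly the details it leaves implicit.
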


\begin{proof}
    By Lemma~\ref{lem:maximal-labeled}, there must be one unlabeled edge $\edgeq$ going down from $v_c$.
    In the toric code, there exists a path of length 4 that starts from $c$, goes through qubit $q$ and loops back to $c$. Hence, since $\edgeq$ is unlabeled, there exist an unlabeled $I$-link $v_c \link v'_c$. 
\end{proof}

\begin{lemma}\label{lem:exists-unlabeled-dangling}
    Given a minimal configuration $C$ for a syndrome where all unsatisfied checks are at distance $\ge 4$, and a maximal proper labeled link $u \xlink v$ of length $\ge 4$, where $v$ is a distance $d\leq 4$ from the bottom.
    Then any dangling link $v \link \bottom$ is of length $d$ and unlabeled.
\end{lemma}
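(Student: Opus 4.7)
The length-$d$ claim is immediate from the notational convention: $v \link \bottom$ denotes a dangling link of minimum length starting at $v$, which by hypothesis equals $d$. The substantive content is the \emph{unlabeled} property, which I would establish by contradiction. Assume that some minimum-length dangling link $v \link \bottom$ contains at least one labeled edge.

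The plan is to apply Lemma~\ref{lem:maximal-labeled} at $v$ to constrain the labels on its three downward edges, then invoke the Walk Inversion Lemma~\ref{lem:walk-inversion} to produce a configuration of weight strictly less than $|C|$. Lemma~\ref{lem:maximal-labeled} tells us that at most one downward edge of $v$ is labeled, and only when $u$ lies in the subtree rooted at $v$, in which case this labeled edge is the first step of $u \xlink v$. I split according to whether $v \link \bottom$ starts with this labeled edge (Case A) or with an unlabeled edge (Case B). Case A is highly constrained: since $|v \link \bottom| = d \le 4 \le |u \xlink v|$, the dangling link and $u \xlink v$ share their initial prefix, and the branch below $v$ toward $u$ has depth exactly $d$; a careful analysis using Lemma~\ref{lem:maximal-labeled} at the intermediate satisfied vertices, together with the reducibility in a minimal configuration of isolated labeled pairs at satisfied vertices carrying a dangling edge, reduces to Case B via a weight-preserving rerouting through one of the two unlabeled downward children of $v$. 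In Case B, the dangling link starts unlabeled but some later edge $e$ is labeled. Using Lemma~\ref{lem:maximal-labeled} at $u$, I obtain an unlabeled downward edge at $u$ that extends to an unlabeled dangling link $\bottom \link u$ of some length $\ell$. Concatenating $\bottom \link u$, the labeled link $u \xlink v$, and the sub-path of $v \link \bottom$ up through $e$ continued by unlabeled edges to a dangling edge yields a walk $W$ between two dangling edges containing at least $|u \xlink v| + 1 \ge 5$ labeled edges and at most $\ell + d - 1$ unlabeled ones. Applying walk inversion (Lemma~\ref{lem:walk-inversion}) produces a configuration $C'$ with $|C'| - |C| = \#(\text{unlabeled in }W) - \#(\text{labeled in }W)$, which is strictly negative whenever the labeled edges dominate in $W$, contradicting the minimality of $C$.

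The main obstacle is delivering the labeled-dominance and edge-disjointness required by the walk inversion. Edge-disjointness is handled by routing the extensions from $u$ and past $e$ through sibling subtrees not used by $u \xlink v$; this is possible thanks to the at-least-two unlabeled downward children at $u$ and at the relevant intermediate vertices guaranteed by Lemma~\ref{lem:maximal-labeled}, together with the tree structure of $\Tiq$ which makes distinct downward branches automatically edge-disjoint. Labeled-dominance requires a quantitative bound on $\ell$, ensuring $|u \xlink v| + 1 > \ell + d - 1$. The key observation here is that $u$ cannot sit arbitrarily high in $\Tiq$ when $v$ is within distance $d \le 4$ of the bottom and $|u \xlink v| \ge 4$: the labeled link already occupies depth in the tree near $v$, and the unlabeled child chosen at $u$ can always be routed into a sub-branch whose depth is controlled by the local tree geometry, so that $\ell$ can be pinned at a value small enough to close the inequality. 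This quantitative bookkeeping, together with the corner-case reduction in Case A, is where the technical effort of the proof concentrates.
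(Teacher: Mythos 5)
There is a genuine gap. The paper's proof rests on one observation that your argument never makes: since $v$ is at distance $d\le 4$ from the bottom, every vertex in the subtree rooted at $v$ is at distance $\le 3$ from $v$, and since $v$ is unsatisfied and all unsatisfied checks are pairwise at distance $\ge 4$, \emph{the subtree below $v$ contains no unsatisfied vertex at all}. Once this is noted, the conclusion is local and cheap: $u$ must lie above $v$, maximality of $u\xlink v$ forces the downward edges of $v$ to be unlabeled (any labeled one would propagate, by parity at the satisfied vertices below, down to a dangling edge and extend the link), and the remaining labeled edges in the subtree would form even-degree sets, i.e.\ dangling-to-dangling labeled paths whose outright erasure strictly decreases the weight, contradicting minimality of $C$. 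No long walk inversion through $u$ is needed, and no comparison between the length of $u\xlink v$ and lengths of dangling links at $u$ ever enters the argument.

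Your Case~B, by contrast, hinges on a quantitative claim that is false in general: you need $|u\xlink v|+1>\ell+d-1$, where $\ell$ is the length of an unlabeled dangling link at $u$, and you assert that $u$ ``cannot sit arbitrarily high'' so that $\ell$ can be pinned small. But for an $I$-shaped link with $u$ above $v$, $u$ sits at distance $|u\xlink v|+d$ from the bottom (e.g.\ $8$ already for a length-$4$ link with $d=4$), so $\ell\ge |u\xlink v|+d$ and the inequality fails; the walk inversion you propose then \emph{increases} the weight and yields no contradiction. Case~A is likewise not carried out: the ``weight-preserving rerouting'' and the appeal to Lemma~\ref{lem:maximal-labeled} at \emph{intermediate satisfied} vertices (to which that lemma does not apply --- it speaks only of endpoints of maximal links) are left as assertions. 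In short, the decisive distance argument is missing, and the substitute bookkeeping you lean on does not close.
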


\begin{proof}
We note that the subtree rooted in $v$ does not contain any unsatisfied vertex (since all vertices in the subtree are at distance $\leq$ 3 from $v$). In particular, the proper link $u \xlink v$ must come from above. Since $u \xlink v$ is  maximal,  it cannot be extended inside the subtree rooted in $v$, and since $C$ is minimal it follows that all the edges in this subtree are unlabeled. Thus, any dangling link $v\link \bottom $ is of length $d$ and unlabeled. 
\end{proof}

\subsection{Proof of Lemma~\ref{lem:unlabeled-ge4}}\label{app:unlabeled-ge4-fullproof}

\begin{proof}

Let $\Co^*$ be any minimal root-unlabeled configuration on the decoding tree $\Tiq$. 

\paragraph{Case 1 (Proper labeled link $u^0 \xlink v^0$):} 
Suppose there is a proper labeled link $u^0 \xlink v^0$ of length $\ge 5$.  
If there are several choices,  it is always possible to take one that is robustly maximal and as deep as possible (\idest, that in the subtrees rooted in $u^0$ and $v^0$, there are no labeled links of length $\ge 5$, except possibly $u^0\xlink v^0$).
 
Now recursively create an alternating chain $\mathcal{A}$ containing the labeled link $u^0 \xlink v^0$ that goes from one dangling edge to another. This alternating chain will have the property: 
$$ P:  \text{all labeled links in $\mathcal{A}$ are maximal}$$
We start with $\mathcal{A} = \{ u^0 \xlink v^0 \}$. Set $i=0$ and do recursively starting from $u^0$ (and later do the same for $v^0$):

\begin{itemize}
    \item[(i)] 
    $i$ even: an unlabeled link must be appended. 
If $u^i$ is at distance $\ge 5$ from the bottom, then there exists an unlabeled $I$ link $u^i \link u^{i+1}$ such that $u^i$ is its upper endpoint (by Lemma~\ref{lem:exists-I}). Set $\mathcal{A} = \mathcal{A}\cup \{ u^i \link u^{i+1} \}$\footnote{Strictly speaking, an alternating chain is an ordered sequence of links so the union should be seen as preppending/appending the link to the ordered list.}.

Otherwise, $u^i$ is at distance $d \leq 4$ from the bottom, and there exists a dangling unlabeled link $u^i\link \_$ of length $d$ (by Lemma~\ref{lem:exists-unlabeled-dangling}). Set $\mathcal{A} = \mathcal{A}\cup \{u^i \link \bottom \}$ and terminate the algorithm on this side.
    \item[(ii)]
    $i$ odd: a labeled link must be appended.
Take any labeled link having $u^i$ as endpoint (we show below that is necessarily maximal).  If it is a proper labeled link $u^i\xlink u^{i+1}$, set $\mathcal{A} = \mathcal{A}\cup \{u^i \xlink u^{i+1}\}$.

Otherwise, it is a dangling link $u^i\xlink \_$. Set $\mathcal{A} = \mathcal{A}\cup \{u^i \xlink \_ \}$  and terminate the algorithm on this side.
\end{itemize}

Since $u^0 \xlink v^0$ was taken to be as deep as possible, any time a labeled link is appended, it must be of length $\leq 4$, and if it is a proper link, since all vertices associated to unsatisfied checks are at distance $\ge 4$, it must be of length exactly 4, and thus maximal, so property $P$ is conserved throughout the algorithm.\smallskip

This algorithm will always terminate because, one of the following is always true:\smallskip
\begin{itemize}
\item[(i)] When adding an unlabeled link, either the algorithm terminates on this side or an $I$-link was appended, in which case the depth increases by 4.

\item[(ii)] When adding a labeled link,  either the algorithm terminates on this side, or a proper labeled link $u^i \xlink u^{i+1}$ of length 4 has been appended, in which case the depth decreases by at most 3 as $u^i$ is already the bottom endpoint of an unlabeled $I$ link so it cannot also be the bottom endpoint of a labeled $I$ link. 
\end{itemize}
Thus the algorithm must end.

We now have an algorithm to construct an alternating chain that goes from dangling edge to dangling edge. This is not enough to use the inversion Lemma directly, so we further modify it to ensure that there is no dangling unlabeled link of length $\ge3$.

Suppose there is one unlabeled dangling link of length $\ge 3$, without loss of generality write it $u^k \link \_$. 
Then since $u^k$ it at distance $\ge 3$ from the bottom, $u^k$ must be the upper endpoint of a $\Gamma$-link $u^k \link u^{k+1}$. This $\Gamma$-link must be unlabeled because :
\begin{itemize}
    \item[(i)] If $k=0$, $u^{k-1}=v^0$ and $u^{k-1}\xlink u^{k+1}$ would be longer than $u^0\xlink v^0$ and this would give a contradiction to the fact that $u^0\xlink v^0$ is robustly maximal.
    \item[(ii)] If $k\neq 0$, then $u^{k-1}\xlink u^{k+1}$ would be a labeled link of length $\ge 6$ which would contradict the fact that $u^0\xlink v^0$ was taken as deep as possible.
\end{itemize}
 
Now set $\mathcal{A} = \Big(\mathcal{A}\setminus \{ u^k\link \_ \} \Big) \cup \{u^k \link u^{k+1}\}$.

At this point, we arrive at unsatisfied check $u^{k+1}$ at distance $\leq 2$ of the bottom, and we need to add a labeled dangling link to finish our alternating chain. 
Since $u^{k+1}$ is unsatisfied, it is must be the endpoint of a labeled link.
It is impossible that $u^{k+1}$ is the bottom endpoint of a labeled $I$ or $\Gamma$ link as it would contradict the fact that $u^k \link u^{k+1}$ is unlabeled. So it must either be a labeled dangling link, in which case we are done, or it is a labeled $\Lambda$-link. In that case we append it and then finish the alternating chain by appending a length 2 unlabeled dangling link that must exist according to Lemma~\ref{lem:exists-unlabeled-dangling}.
We now have the alternating chain with all the required properties to finish the proof.

We look at the walk given by the alternating chain (with possible back-and-forth). We do the path inversion according to Lemma~\ref{lem:walk-inversion}.
This gets rid of the labeled link $u^0 \xlink v^0$ of length $\ge 5$.

To conclude the proof, we now argue that the configuration obtained by applying the walk inversion Lemma is minimal.

The added cost of labeling each previously unlabeled proper 4-link will be compensated by unlabeling the next proper labeled 4-link, if it exists. As they are both of length 4 and there will be as many edges being unlabeled as being relabeled\footnote{It should be easy to convince oneself that this is also true when there are overlapping edges.}.
We only need to deal with the dangling links at both ends of the alternating chain.
We show that whatever weight will be added at both end is compensated by the unlabeling of $u^0 \xlink v^0$, thus keeping the minimality of the configuration.

There are several cases to consider. On both sides, the chain can end with either an unlabeled link of length 1 or 2, or a labeled link of length 1, 2, 3 or 4, preceeded by an unlabeled $I$ link of length 4. Consider one end of the alternating chain, and denote these cases as $U_1,U_2$ and $L_1,\dots,L_4$ respectively.

\begin{itemize}
\item $U_1$,$U_2$: adds 1 or 2 to the weight
\item $L_1$: adds 3 to the weight.
\item $L_2$,$L_3$,$L_4$: adds respectively 2, 1 or 0 to the weight
\end{itemize}

The only troublesome case is $L_1$, since if that where to happen on both ends of the alternating chain, the added weight of 6 might not be compensated by unlabeling of a weight 5 labeled link $u^0 \xlink v^0$. A simple parity argument takes care of this corner case: in case the $u^0 \xlink v^0$ is of length 5, $u^0$ and $v^0$ must be at a depth of different parity in the tree\footnote{The only case where this is not true is when the link contains the root but this case is purposefully avoided in the Theorem}. Since at each step adding a 4-link conserves the parity of the depth, it cannot be that the case $L_1$ happens on both ends. 

\paragraph{Case 2 (Dangling labeled link $u^0 \xlink \_$):} Suppose there is a dangling labeled link with endpoint $u^0$ of length $\ge 5$. The same algorithm can be applied from $u^0$, except the argument for the minimality has to be (easily) adapted from Case 1.

\paragraph{Recursive Application of the Algorithm:}
By applying the algorithm as long as there are labeled links of length $\ge 5$ (except possibly if it contains the root), this algorithm transforms any minimal root-unlabeled configuration in a minimal root-unlabeled configuration where all labeled links not containing the root are of length $\leq 4$.
\end{proof}

\subsection{Proof of Lemma~\ref{lem:labeled-ge4}}\label{app:labeled-ge4-fullproof}

\begin{proof}
Starting with a minimal configuration $\Cx^*$, and using the algorithm from the proof of Lemma~\ref{lem:unlabeled-ge4}, one can remove all the length $\ge 5$ labeled links not containing the root. 

Now suppose there is a labeled link of length $\ge 5$ containing the root $e_q$.
Since $q$ is neighboring an unsatisfied check $c$, we know that there is an unsatisfied vertex $u_c$ just next to $\edgeq$.

Note that the edges going down from $u_c$ must be unlabeled. 
This is because, since $\edgeq$ is labeled, to keep the parity of $u_c$, it must be that two edges below $u_c$ are labeled, hence, that there are two labeled links $u_c \xlink u'$ and $u_c \xlink u''$ below it. But then the labeled link $u' \xlink u''$ does not contains the root and must be of length $\ge 8$ which is a contradiction. So we conclude that 
the longest labeled link starting from $u_c$ and going through the root $\edgeq$ is robustly maximal.

\paragraph{Case 1 (proper labeled link $u_c \xlink v$):}
      Suppose the labeled link containing the root is of length $\ge 5$ of the form $u_c \xlink v$. 
      Let $u_c$ and $w$ be the vertices incident to $\edgeq$. Since all the labeled links not containing the root are of length $\leq 4$, 
      it follows that from the three edges going down from $w$, only the one contained in the path going to $v$ is labeled.  Then, we can take an  unlabeled edge $e$ going down from $w$, such that there exists a proper unlabeled 4-link $u_c \link v'_c$ containing $\edgeq$ and $e$. 

     \begin{itemize}
         \item[(i)] Suppose $v'_c$ is contained in a labeled proper 4-link $v'_c \xlink v''_c$. In that case, we use a ``trick'' and run the algorithm from the proof of Lemma~\ref{lem:unlabeled-ge4}, as if  $v''_c \link v$ was a proper labeled link  (we can do this since the edges incident to $v''_c$ and $v$ on the $v''_c \link v$ link are labeled; also, note  that the link $v''_c \link v$ contains  at least 5 labeled edges, which guarantees the minimality of the new configuration, obtained after applying the algorithm). This will invert the labels of the edges on the $v''_c \link v$, giving us a configuration where the link $u_c \xlink v'_c$ is labeled and contains the root $e_q$, and where all the labeled links are of length 4, thus concluding the proof. 
        \item[(ii)] Suppose $v'_c$ is not contained in a proper link, it must then contained in a dangling labeled link $v'_c \xlink 
        \bottom$. This dangling labeled link cannot be of length 1 since we know $v$ is deeper than $v'_c$. So it can be of length 2, 3 or 4. 
    In which case $v$ is the vertex endpoint of an unlabeled dangling link $v\link \bottom$, of length respectively 1, 2 or 3. In that case inverting the alternating chain $\{v'_c \xlink          \bottom,v'_c\link  v,v\link \bottom\}$ gives a minimal configuration where all labeled links are of length at most 4, and this concludes the proof.
     \end{itemize} 

\paragraph{Case 2 (dangling labeled link $u_c \xlink \_$):} Suppose the labeled link containing the root is a dangling link of length $\ge 5$. The same argument as above can be adapted in combination with Case 2 in the proof of Lemma~\ref{lem:unlabeled-ge4} to conclude the proof. 

\end{proof}

\section{Proof of Theorem~\ref{thm:4inarow} and~\ref{thm:SBMS}}

\subsection{Pruning}

\begin{definition}\textbf{Pruning}\smallskip

Given a decoding tree $\Tiq$, we define the pruning of $\Tiq$ as a pair  $(\hTiq, \mathcal{S}) $ where:\smallskip

\begin{itemize}
    \item $\mathcal{S}$ is a set  of \textbf{removed subtrees} (each subtree in $\mathcal{S}$ is rooted in some vertex). 
    \item $\hTiq $, the \textbf{pruned tree}, is a connected subgraph of $\Tiq$ containing the root edge $e_q$, obtained by removing the subtrees in $\mathcal{S}$.
\end{itemize}

Since each  subtree  in $\mathcal{S}$ is routed in a vertex, $\hTiq $ has new dangling edges (which are incident in $\Tiq$ to the root of some tree in $\mathcal{S}$).
We refer to those as \textbf{severed edges}.
\end{definition}

\subsection{The Non-Degenerate Error \texorpdfstring{$\epsilon^4$}{epsilon4} of Weight 4 is Undecodable by MS}\label{app:weightfour_a}

We now proceed to show the upper bound on the decoding radius of Theorem~\ref{thm:4inarow}, by showing that the decoder is blind on the syndrome $\synd^4$. 

Let $c$ be an unsatisfied check in $\synd^4$, the syndrome associated to $\epsilon^4$, and $q \in \mathcal{N}(c)$.
Let $\Tiq$ be the decoding tree associated to $q$ at iteration $i$, and let also $\sc$ be the fake syndrome where only $c$ is unsatisfied (Section~\ref{subsec:local-blindness}).  

On  $\Tiq$, we now consider some minimal configurations $\Cx, \Co$ associated to syndrome $\synd$, and  $\Cxc,\Coc$ associated to syndrome $\sc$ where all labeled links are of length $\leq 4$ and for all proper $I$ and $\Lambda$ links, both endpoints are a copy of the same check (thanks to Lemmas~\ref{lem:unlabeled-ge4},\ref{lem:labeled-ge4}, and~\ref{lem:c2c-e4}). 

\paragraph{Outline of the proof:} We now proceed to show that  $|\Cx|-|\Co| = |\Cxc|-|\Coc| $, which implies the blindness property for qubit $q$ by Theorem~\ref{theo:Wiberg96}.

Explore $\Tiq$ by doing a breadth-first search. 
Whenever a vertex $w_{c'}$ associated to a check such that $\synd(w_{c'})=1$  and $c'\neq c$ is found in $\Tiq$, proceed as follows.

Let $g$ be the grandfather of $w_{c'}$ in $\Tiq$, and prune the subtree rooted in $g$ in $\Tiq$. 
After doing that for the whole tree, we are left with a pair $(\hTiq, \mathcal{S})$.

Note that for any subtree in $\mathcal{S}$, it holds that all unsatisfied vertices are at depth~$\ge 2$, as otherwise this would give a path in the toric code between unsatisfied checks of length~$< 4$.

\paragraph{Claim 1:}
In any of the minimal configurations $\Cx, \Co, \Cxc$ and $\Coc$ all severed edges are unlabeled.\smallskip

Consider a severed edge, incident (in $\Tiq$) to the root vertex of some removed subtree $S$.
If $e$ is labeled, it must belong to a labeled link. There are 3 cases:
\begin{itemize}
    \item[(i)] Suppose $e$ belongs to a dangling labeled link $v \xlink \bottom$ with endpoint $v$ in $S$. 
    Since $S$ was removed, $v$ must be at depth $\ge 2$ in $S$. So $v \xlink \bottom$ must be of length $\ge 7$ (2 to go from $v$ to $e$, plus $e$, plus at least $4$ edges to go back down to a dangling edge in $\hTiq$) which is impossible.
    \item[(ii)]  Suppose $e$ belongs to a dangling labeled link $v\xlink \bottom$ with endpoint $v$ in $\hTiq$. Since there is an unsatisfied vertex $u$ at depth 2 in $S$, and that $u$ and $v$ must be at distance $\ge 4$, then $v$ is at distance $\ge 5$ from the bottom, thus the length of $v \xlink \bottom$ is $\ge 5$ which is impossible by assumption. 
    \item[(iii)] Suppose $e$ belongs to a labeled proper link. It must be of the form $u_c \xlink v_c$ where $u_c \in \hTiq$ and $v_c \in S$ and of length 4. 
    But $S$ was removed because of the unsatisfied vertex $w_{c'}$ with $c'\neq c$, which is at depth 2 in $S$. Since $d(u_c,v_c)=4$, $v_c$ is at depth $\ge 2$ in $S$, and  $d(u_c,w_{c'}) \ge 4$, we conclude that $v_c$ must be at depth exactly 2 in $S$, and moreover $d(u_c,v_c)=d(u_c,w_{c'})=4$. Thus, $w_{c'}\link v_c$ is a $\Lambda$-link with both endpoints at distance $4$ from $u_c$, which  yields a contradiction by applying Lemma~\ref{lem:lambda}.
\end{itemize}

\paragraph{Claim 2:} Denoting $\hat \Co, \hat \Cx, \hat \Coc, \hat \Cxc$ the restriction of the minimal configurations to the pruned tree $\hTiq$, we have that $|\hat \Co|=|\hat \Coc|$ and $|\hat \Cx| = |\hat \Cxc|$.

This is because all the severed edges are unlabeled so the labeling induced by configurations $\hat\Cx$ and $\hat \Cxc$ (resp. $\hat\Co$ and $\hat \Coc$) are interchangeable and of same weight, as otherwise it would contradict the minimality of $\Cx$ and $\Cxc$ (resp. $\Co$ and $\Coc$).

\paragraph{Claim 3:} For every subtree $S \in \mathcal{S}$, since the corresponding severed edge $e$ is unlabeled in both $\Co$ and $\Cx$, then the restriction of $\Co$ and $\Cx$ to $S$ have the same weight.

Clearly the minimality of the labeling on such a subtree does not depend on the rest of the tree since $e$ is unlabeled in both cases, so they must have the same weight.\smallskip

By Claims 1 and 3, 
$$|\Cx|-|\Co| = |\hat \Cx|-|\hat \Co| \text{ and } |\Cxc|-|\Coc| = |\hat \Cxc|-|\hat \Coc|$$
By Claim 2, 
$$|\hat \Cx|-|\hat \Co| = |\hat \Cxc|-|\hat \Coc|$$
Hence 
$$|\Cx|-|\Co| = |\Cxc|-|\Coc| $$

\subsection{Decoding of Non-Degenerate Errors of Weight \texorpdfstring{$\le 3$}{le 3} by MS}\label{app:weightfour_b}

We now show that the non-degenerate decoding radius of the MS is $\ge 3$, to conclude the proof of Theorem~\ref{thm:4inarow}. We first give a few definitions and prove two preliminary Lemmas.

Given a toric code $T$, we will refer to a connected planar subgraph $\tau$ induced by one or several plaquettes of $T$ as a patch. Precisely, $\tau$ is a connected subgraph induced by its vertex set $V(\tau)$, which must satisfy  $V(\tau) = \cup_i V(p_i)$, where $V(p_i)$ denotes the set of vertices (checks) of plaquette $p_i$. It is easily seen that if $\tau$ is a patch of $T$ and $T'$ is a toric code of higher minimum distance, then $\tau$ can be embedded in $T'$ via an injective graph homomorphism $\tau \rightarrow T'$, and this embedding is unique up to translations, reflections, and rotations in $T'$. We will use such a patch embedding to embed in $T'$  planar subgraphs of $T$ and syndromes. We note that (i) for any planar subgraph $\sigma$ of $T$ there exists a patch  $\tau$ in $T$ such that $\sigma$ is a subgraph of $\tau$, and (ii) there exists a patch $\tau$ with $V(\tau) = V(T)$. 

\begin{definition}
    Let $T$ and $T'$ be toric codes, with $T'$ of larger minimum distance. 
    \begin{itemize}
       \item[(i)] Consider a planar subgraph $\sigma$ of $T$, it must be contained in  a patch $\tau$ in $T$. We define the embedding of $\sigma$ in $T'$ as the subgraph $\sigma'$ of $T'$ induced by  embedding $\tau$ in $T'$. This embedding is unique up to translations, reflections, and rotations in $T'$.
    
        \item[(ii)] Consider a syndrome $\synd$ on $T$, and a minimal connected (necessarily planar) subgraph $\sigma$  of $T$ covering all its unsatisfied checks. We define the embedding of $\synd$ in $T'$ as the syndrome $\synd'$ on $T'$ induced by  embedding $\sigma$ in $T'$. This embedding is unique up to translations, reflections, and rotations in $T'$. 
    \end{itemize}
\end{definition}

\begin{definition}
    For a toric code $T$, the distance  between two sets of checks $C, C' \subseteq V(T)$ is defined as $d(C,C') := \min \{\,d(c, c') \mid c\in C,\ c' \in C'\,\}$. 
    We extend this definition to qubits, errors, syndromes, and subgraphs of the toric code in the following way (where the same extension applies to the second argument of $d(-, -)$): 
    \begin{itemize}
    \item For a set of qubits $Q\subset E(T)$, $d(Q,-) := d(\mathcal{N}(Q), -)$, where $\mathcal{N}(Q) := \cup_{q\in Q} \mathcal{N}(q)$. 
    \item For a syndrome $\synd$, $d(\synd,-)  :=d(\supp(\synd), -)$,  where $\supp(\synd) := \{c \in V(T) \mid \synd(c)=1\}$ 
    \item For an error $\evec$, $d(\evec,-)  :=d(\supp(\evec), -)$,  where $\supp(\evec) := \{q \in E(T) \mid \evec(q)=1\}$. 
    \item For a subgraph $\sigma$ of $T$, $d(\sigma, -) := d(V(\sigma), -)$, where $V(\sigma)$ is the vertex  set of $\sigma$.
    \end{itemize}
\end{definition}

\begin{definition}
    The \textbf{diameter} of a syndrome $\synd$ is the minimum over the diameters of all the connected subgraphs covering its unsatisfied checks.
\end{definition}

\begin{definition} For a toric code, a 
\textbf{ $\delta$-local syndrome} $\synd$ is a syndrome where all unsatisfied checks are contained in a connected subgraph of diameter $\delta$.
\end{definition}

\begin{definition}
    We say a syndrome $\synd$ is \textbf{decoded in $k$ iterations} if the MS decoder finds an error $\evec$ such that $\Hmat\evec = \synd$ at iteration $k$ and not before.
\end{definition}

\begin{lemma}\label{lem:sigma}
    Let $T$ be a toric code of distance $d$, and $\synd$ be a $\delta$-local syndrome decoded in $k$ iterations.
    If $d \ge 2k+\delta$, then for any toric code $T'$ of distance $d' > d$, the embedding of $\synd$ in $T'$ will be corrected in $k$ iterations. 
\end{lemma}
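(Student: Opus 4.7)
The plan is to exploit Theorem~\ref{theo:Wiberg96}: the \aposteriori value $\app(q,i)$ at iteration $i$ is fully determined by the depth-$i$ decoding tree together with the syndrome values attached to its vertices, and a walk~without~return of length $\le i$ starting at edge $\edgeq$ reaches only checks at graph distance $\le i$ from $q$. Hence $\app(q,i)$ depends only on the restriction of the syndrome to the ball of radius $i$ around $q$ in the toric code. I would then show that under $d \ge 2k + \delta$, the depth-$k$ syndrome-labelled decoding trees of corresponding qubits in $T$ and $T'$ are isomorphic, which by Theorem~\ref{theo:Wiberg96} forces MS to make the same hard decisions on both codes at every iteration $i \le k$, and hence to decode the embedded syndrome at iteration $k$ and not earlier.

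To set up the correspondence, let $\sigma$ be a connected subgraph of diameter $\le \delta$ in $T$ covering $\supp(\synd)$, and let $\sigma'$ denote its embedding in $T'$, well-defined up to translations, reflections and rotations by the definition of syndrome embedding. Since the MS decision on any qubit in $T$ outside the depth-$k$ neighborhood of $\sigma$ is trivial (its decoding tree carries the all-zero syndrome), one can restrict attention to qubits $q$ of $T$ within distance $k$ of $\sigma$; using the vertex-transitivity of the square lattice, each such $q$ admits a well-defined image $q'$ in $T'$ within distance $k$ of $\sigma'$. The core technical step is to establish that $\mathcal{T}_{k,q}$ and $\mathcal{T}_{k,q'}$ are isomorphic as syndrome-labelled trees. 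The natural way is to lift everything to the universal cover $\mathbb{Z}^2$ of the square lattice: any walk~without~return of length $\le k$ from $q$ lifts uniquely to a walk in $\mathbb{Z}^2$ starting at some fixed lift $\tilde q$ and ending within distance $k$ of $\tilde q$, hence within $2k + \delta$ of any chosen lift $\tilde\sigma$ of $\sigma$. The hypothesis $d \ge 2k + \delta$ prevents any non-trivial deck transformation of either $T$ or $T'$ from identifying two such endpoints, so the walks-to-checks assignment is the same in $T$, in $T'$, and in $\mathbb{Z}^2$; similarly, a check reached by such a walk is unsatisfied if and only if its lift belongs to $\tilde\sigma$, a criterion that agrees in both codes. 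The two syndrome-labelled decoding trees therefore coincide, and Theorem~\ref{theo:Wiberg96} gives $\app(q,i) = \app(q',i)$ for all $i \le k$.

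To conclude, qubits of $T'$ at distance $>k$ from $\sigma'$ have identically zero syndromes on their depth-$k$ decoding trees, so their \aposteriori values vanish and they are never flipped in the first $k$ iterations, which is the correct behaviour since $\evec'$ is supported in the depth-$k$ neighborhood of $\sigma'$. Combining this with the previous paragraph, the MS estimate produced on $T'$ at each iteration $i \le k$ is exactly the image under the embedding of the MS estimate produced on $T$, and since the parity-check constraint is local, $\Hmat_{T'}\tilde\evec' = \synd'$ holds if and only if $\Hmat_T\tilde\evec = \synd$ does. In particular, MS on $T'$ succeeds exactly at iteration $k$ (with output $\evec'$) and fails at all earlier iterations, as required. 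The main technical hurdle will be making the universal cover argument rigorous while carefully tracking the walk~without~return structure that defines the decoding tree, and verifying that the condition $d \ge 2k + \delta$ is tight enough to rule out every identification of lifted endpoints by the deck groups of both $T$ and $T'$ within the entire depth-$k$ decoding tree of each relevant qubit.
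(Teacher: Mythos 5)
Your proposal is correct and follows essentially the same route as the paper: both arguments rest on the locality of the depth-$k$ decoding tree and on the condition $d \ge 2k+\delta$ ruling out walks that reach unsatisfied checks by wrapping around the torus (which your universal-cover/deck-transformation formulation makes precise, where the paper phrases it as the relevant edge set not containing a logical operator), together with the observation that qubits far from the syndrome see an all-zero syndrome on their trees. One small correction: for those far qubits the \aposteriori value does not vanish but equals $1+2i$ (strictly positive, i.e.\ ``no error''), which is what guarantees they are never flipped.
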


\begin{proof}
    Let $\sigma$ be the connected subgraph containing all the edges at distance $\le \delta$ from $\synd$, and their check neighborhood, that is:
    $$E(\sigma) = \{e\in E(T) \mid d(e,\synd) \le \delta \} \text{ and } V(\sigma) = \displaystyle\bigcup_{e \in E(\sigma)} \mathcal{N}(e). $$
    Note that $\sigma$ is planar as $\delta < d$ so we can define $\sigma'$ to be the embedding of $\sigma$ in $T'$. 
    For any qubit $q' \in T'$, either $d(q',\synd)\ge k$ in which case one can see that its \aposteriori at each iteration will be $\app(q',i) = 1+2i$, for $i\leq k$ or $d(q',\synd)< k$, in which case, $q' \in \sigma'$, and at each iteration, its \aposteriori is the same as that of $q\in \sigma$, its corresponding qubit in $T$. This is because since $d\ge 2k+\delta$, there are no qubits in $T$ that have a decoding tree reaching unsatisfied checks of $\synd$ by going around the toric code. Formally, this condition can be stated as: given $\tau$ a minimal connected covering of $\synd$, then for all $q$ in $T$, $E(\tau) \cup \{ e \in E(T),\  d(e,q)\le k \}$ does not contain a logical operator.
    Thus at iteration $k$, the decoder will converge on the embedding of $\synd$ on $T'$.
\end{proof}

\begin{lemma}\label{lem:sigma1-sigma2}
    Let $T$ be any toric code, $\synd_1$ a syndrome decoded in $k_1$ iterations and  $\synd_2$ a syndrome decoded in $k_2$ iterations.
    Further assume that $k_1\leq k_2$, and that if the decoder where left to  run on $\synd_1$ for any number $k \in [k_1,k_2] $ of iterations (even after the syndrome is satisfied), the error outputted at each iteration would still satisfy the syndrome.
    Let $\delta$ be the distance between $\synd_1$ and $\synd_2$. 
    If $\delta \ge 2k_2$ then the syndrome $\synd_1 + \synd_2$ is decodable in $k_2$ iterations.
\end{lemma}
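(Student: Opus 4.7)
The plan is to decouple the decoding of $\synd_1+\synd_2$ into two essentially independent MS decodings of $\synd_1$ and $\synd_2$, exploiting the distance hypothesis $\delta\ge 2k_2$. By Theorem~\ref{theo:Wiberg96}, the aposteriori value at a qubit $q$ at iteration $i$ depends only on a minimal configuration of the decoding tree $\mathcal{T}_{i,q}$, and the vertices of $\mathcal{T}_{i,q}$ correspond to checks at graph-distance at most $i$ from $q$ in the toric code. Consequently, if $\mathcal{T}_{k_2,q}$ contains no unsatisfied check of $\synd_2$, the labelings on $\mathcal{T}_{k_2,q}$ coming from input $\synd_1$ and from input $\synd_1+\synd_2$ are identical, the minimal configurations coincide, and $\app(q,k)$ is the same under both inputs for every $k\le k_2$. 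A symmetric observation holds for qubits whose tree contains no unsatisfied check of $\synd_1$.

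Using this, I would partition the qubits of $T$ into three sets $Q_1,Q_2,Q_0$ according to whether $\mathcal{T}_{k_2,q}$ contains an unsatisfied check of $\synd_1$ only, of $\synd_2$ only, or of neither. The hypothesis $\delta\ge 2k_2$, combined with the triangle inequality applied to any two checks appearing in the same tree, ensures that no tree can contain unsatisfied checks from both syndromes, so the partition is well-defined. For $q\in Q_0$, the all-zero configuration is the unique minimal root-unlabeled configuration ($|\Co^*|=0$) while $|\Cx^*|\ge 1$, so $\app(q,k_2)\ge 1>0$ and the estimate at $q$ is $0$; the same reasoning, applied to the $\synd_j$-only decoding, shows that the MS output $\evec^{(j)}$ on input $\synd_j$ vanishes on every qubit outside $Q_j$. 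Here, the hypothesis that the MS on $\synd_1$ continues to output a valid error for every $k\in[k_1,k_2]$ ensures that $\evec^{(1)}$ is well-defined and satisfies $\Hmat\evec^{(1)}=\synd_1$ at iteration $k_2$ (the analogous property for $\evec^{(2)}$ at iteration $k_2$ is part of the definition of $\synd_2$ being decoded in $k_2$ iterations).

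Combining the two previous steps, I would then define $\evec$ to be the MS output on $\synd_1+\synd_2$ at iteration $k_2$; by the aposteriori matching, $\evec(q)=\evec^{(1)}(q)$ on $Q_1$, $\evec(q)=\evec^{(2)}(q)$ on $Q_2$, and $\evec(q)=0$ on $Q_0$. Since $\evec^{(j)}$ vanishes outside $Q_j$, the parity at any check $c$ reads
\[
\sum_{q\in\mathcal{N}(c)}\evec(q)\;=\;\sum_{q\in\mathcal{N}(c)}\evec^{(1)}(q)\;+\;\sum_{q\in\mathcal{N}(c)}\evec^{(2)}(q)\;=\;\synd_1(c)+\synd_2(c),
\]
so $\Hmat\evec=\synd_1+\synd_2$ and the combined syndrome is decoded by iteration~$k_2$. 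The main obstacle will be the precise verification that the partition $Q_1,Q_2,Q_0$ is well-defined under $\delta\ge 2k_2$, which requires a careful bookkeeping between tree-distance in $\mathcal{T}_{k_2,q}$ and graph-distance in the toric code (handling the boundary case where distances are attained with equality); a secondary subtlety is using the hypothesis on continuity of the $\synd_1$ decoding to make sense of $\evec^{(1)}$ at iteration $k_2\ge k_1$, rather than only at $k_1$.
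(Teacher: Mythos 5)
Your proposal is correct and follows essentially the same route as the paper: split the qubits according to which syndrome's unsatisfied checks (if any) can appear in their depth-$k_2$ decoding tree, use $\delta\ge 2k_2$ to show no tree sees both, conclude that each qubit's \aposteriori value matches the single-syndrome decoding (or stays positive, forcing estimate $0$), and add the two partial corrections. Your write-up is somewhat more explicit than the paper's (notably in checking that $\evec^{(j)}$ vanishes outside $Q_j$ so the check parities split additively), but there is no substantive difference in approach.
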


\begin{proof}
    For any qubit $q$, either is true:
    \begin{itemize}
        \item[(i)] $d(\synd_1,q) \ge k_2$ and $d(\synd_2,q) \ge k_2$.
        In this case the decoding tree of $q$ does not contain unsatisfied checks for any iteration $k\leq k_2$. So $\app(q,k) = 1+2k$. 
        \item[(ii)] There exist $i,j$ with $\{i,j\} = \{1,2\}$ such that $ d(e_i,q) < k_2 $ and $d(e_{j},q) \ge k_2$
    
        In that case the decoder will take the same decision on $q$ as if the syndrome was only $\synd_i$.
    \end{itemize}
    At iteration $k_2$, by (ii) in the $k_2$-neighborhood of $\synd_1$ and $\synd_2$, both syndromes are satisfied, and by (i) all the other qubits are not in error, so the syndrome $\synd_1 + \synd_2$ is satisfied. 
\end{proof}

We are now ready to finish the proof of Theorem~\ref{thm:4inarow}. We do this by treating the first cases numerically, and then use the Lemmas~\ref{lem:sigma}, \ref{lem:sigma1-sigma2} introduced before to prove the result for $d\ge18$.

\begin{proof}
    
    For any toric code of distance $9 \le d < 18$,  all the non-degenerate errors of weight 
    $\leq 3$ are decodable by the MS algorithm.
    This is checked by exhaustive numerical verification.

For toric codes of minimum distance $\ge 18$, we do the following :

\begin{itemize}
    \item[(i)] It can be easily checked that all weight 1 errors are non-degenerate and can be decoded by the MS in 1 iteration, for any toric code of distance $d\geq 3$.
    \item[(i')] Moreover, it is also true that for a weight 1 error on a toric code of distance $\ge 5$, if the decoder where allowed to continue after the first iteration (when it already decodes the error), at the second iteration it would still output the same weight 1 error. 
    \item[(ii)] Any weight 2 non-degenerate error where the two qubits in error are at distance $\ge 2$ will be decodable in 1 iteration because of (i) and Lemma~\ref{lem:sigma1-sigma2} for any toric code of distance $d\ge 6$.
    \item[(ii')] For the remaining non-degenerate errors of weight 2, those where the distance between the two qubits in error is at most 1 (\idest, of diameter 3), there are only 4 types of error up to symmetries and translations (shown in Fig. \ref{fig:non-degenerate-2}). For those, we check numerically that the decoder converges in at most 2 iterations on a toric code of distance 7. 
    Using Lemma~\ref{lem:sigma}, since $2k+\delta = 2\times 2+3 = 7$, we conclude that all the errors of weight 2 and diameter 3 are corrected on a toric code of any distance $d \ge 7$. 
    \item[(iii)] 
    Any weight 3 error where the distance between any two pair of qubits is $\ge 2$ is correctable in 1 iteration by Lemma \ref{lem:sigma1-sigma2} for any toric code of distance $\ge 3$.
    \item[(iii')]
    Remaining errors can be written as the sum of an error $\evec_1$ of weight 1 (of diameter~1) and an error $\evec_2$ of weight~2 and of diameter~$\le 3$ (depicted in Figure~\ref{fig:non-degenerate-2}). 
    Suppose $d(\evec_1,\evec_2)\ge 4$.
    By (i), $\evec_1$  is decoded in $1$  iterations.
    and by (ii'), $\evec_2$ is decoded in  $\le 2$  iterations.
    We can thus use Lemma~\ref{lem:sigma1-sigma2} on the syndromes associated to $\evec_1$ and $\evec_2$ to conclude that $\evec_1+\evec_2$ is decoded in 2 iterations. Note that this argument is true for any error of weight 3 satisfying the condition $d(\evec_1,\evec_2)\ge 4$ irrespective of the distance of the code.
    \item[(iii'')] For the remaining finite number of weight 3 errors of diameter $\delta \le 8=4+1+3$ (where $d(\evec_2,\evec_2) \leq 4$, the diameter of $\evec_1$ is 1 and the diameter of $\evec_2$ is at most 3). We check numerically that these errors are decodable in $\leq 5$ iterations on a toric code of distance $18$. By applying Lemma~\ref{lem:sigma}, and since $2k+\delta = 2\times 5 + 8 \le 18$, we conclude that for any toric code of distance $d\ge 18$, all weight $3$ non-degenerate errors are decodable.
\end{itemize}
\end{proof}

\begin{figure}
\center
\includegraphics{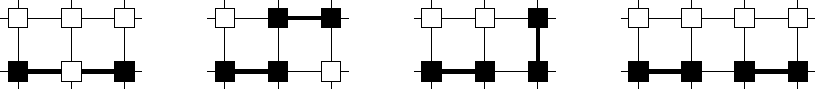}
\caption{The weight 2 errors of diameter 3 (up to symmetries and translation), all are decodable in 2 iterations of the MS.
}\label{fig:non-degenerate-2}
\end{figure}

\subsection{Proof of Theorem~\ref{thm:SBMS}}\label{app:sbms-proof}

We now prove Theorem~\ref{thm:SBMS}, using the tools introduced in the proof of the second part of Theorem~\ref{thm:4inarow}.

For all toric codes of distance $ 7 \le d \le 28$ we enumerate all errors of weight $\le 3$ and check that SB+MS is able to correct them.
There are two cases:
\begin{itemize}
    \item[(i)] the diameter of the error is $\leq 8$.
    \item[(ii)] there must be one qubit in error that is at distance $\ge 4$ from the others.
\end{itemize}

For case (i), we know that all errors of weight $\le 3$ take at most 10 iterations to converge, hence using Lemma~\ref{lem:sigma}, we conclude that this holds for any toric code of distance $\ge 2\times 10+8=28$.
Case (ii) can be taken care of using Lemma~\ref{lem:sigma1-sigma2} using the same argument as above, so we conclude this is true for any toric code of distance $\ge 28$.

\bibliographystyle{IEEEtran}
\bibliography{biblio.bib}

\end{document}